\newtheorem{theorem}{Theorem}[section]
\newtheorem{lemma}{Lemma}[section]
\newtheorem{corollary}[theorem]{Corollary}
\newtheorem{definition}{Definition}[section]
\begin{document}

\title{Distributed Matrix Multiplication with a Smaller Recovery Threshold through Modulo-based Approaches} 

%%%%%%
\author{Zhiquan Tan$^*$, Dingli Yuan$^*$, Zihao Wang$^\dagger$, Zhongyi Huang$^*$\\
$*$ Department of Mathematical Sciences, Tsinghua University \\
$\dagger$ Department of CSE,
HKUST \\
\thanks{This work was partially supported by the NSFC Project No. 12025104.}

}

\maketitle

%%%%%
%% Abstract: 
%% If your paper is eligible for the student paper award, please add
%% the comment "THIS PAPER IS ELIGIBLE FOR THE STUDENT PAPER
%% AWARD." as a first line in the abstract. 
%% For the final version of the accepted paper, please do not forget
%% to remove this comment!
%%
\begin{abstract}
This paper considers the problem of calculating the matrix multiplication of two massive matrices $\mathbf{A}$ and $\mathbf{B}$ distributedly. We provide a modulo technique that can be applied to coded distributed matrix multiplication problems to reduce the recovery threshold. This technique exploits the special structure of interpolation points and can be applied to many existing coded matrix designs.  Recently studied discrete Fourier transform based code achieves a smaller recovery threshold than the optimal MatDot code with the expense that it cannot resist stragglers. We also propose a distributed matrix multiplication scheme based on the idea of locally repairable code to reduce the recovery threshold of MatDot code and provide resilience to stragglers. We also apply our constructions to a type of matrix computing problems, where generalized linear models act as a special case.
\end{abstract}

\begin{IEEEkeywords}
Coded distributed computation, distributed learning, matrix multiplication, information-theoretic security, private information retrieval.
\end{IEEEkeywords}

% === KEYWORDS ====================================================================
% =================================================================================

% For peer review papers, you can put extra information on the cover
% page as needed:
% \ifCLASSOPTIONpeerreview
% \begin{center} \bfseries EDICS Category: 3-BBND \end{center}
% \fi
%
% For peerreview papers, this IEEEtran command inserts a page break and
% creates the second title. It will be ignored for other modes.
\IEEEpeerreviewmaketitle

% ====================================================================
% ====================================================================
% ====================================================================

% === I. INTRODUCTION =============================================================
% =================================================================================

\section{Introduction}
\IEEEPARstart{I}{n} the big data era, the computation of large-scale matrix multiplication is inevitable due to the prevalence of matrix algebra in many important algorithms. As the matrix gets bigger, it is difficult to carry out the multiplication on a single server \cite{buluc2008challenges}. Thus it is common to use distributed servers jointly to compute the result. But when spreading the computation over many workers, there will be a straggler issue \cite{dean2013tail}, namely some worker nodes may return their computation result significantly slower than others or even fail to respond. There are also cases when some of the workers are curious, they may collude and seek to infer information about the encoded data. Security in this case means that there shall be no information leak even when any $X$ workers collude. Coded computing \cite{li2020coded} is a type of method that uses coding theoretic techniques to inject redundancy in the computation to tackle the straggler effect and maintain security.

More precisely, there will be a master who is interested in multiplying two massive matrices $\mathbf{A}$ and $\mathbf{B}$ distributedly with the aid of several workers. Lee et al. \cite{lee2017speeding} are the first to introduce the coding theoretic techniques in distributed computing to mitigate stragglers.  They use a maximal distance separable (MDS) code to inject redundancy into the computation, thus mitigating the straggler effect. Since then, using codes to help distributed matrix multiplication has been very popular, especially using polynomial-based codes. Typical design involves designing two polynomial functions $f(x)$ and $g(x)$ that are generated by matrices $\mathbf{A}, \mathbf{B}$, or even random matrices. The master then chooses some distinct numbers $x_i$. For each worker $i$, it will be assigned two encoded matrices $f(x_i)$ and $g(x_i)$. Each worker will then multiply $f(x_i)g(x_i)$, and the desired matrix product $\mathbf{A}\mathbf{B}$ can be derived from one or some of the product polynomial $f(x)g(x)$'s coefficients. The recovery threshold is defined as the minimum number of responses that the master receives from the workers to deduce the matrix product $\mathbf{A}\mathbf{B}$. It is closely related to the degree of product polynomial $f(x)g(x)$. When $\mathbf{A}$ is partitioned row-wise and $\mathbf{B}$ partitioned column-wise, Yu et al. \cite{yu2017polynomial} propose Polynomial code that constructs $f(x)$ and $g(x)$ with the partitioning blocks as coefficients. Then Dutta et al. propose MatDot code \cite{dutta2019optimal} for inner product partitioning, i.e. column-wise partitioning of matrix $\mathbf{A}$ and row-wise partitioning of $\mathbf{B}$. MatDot code reduces the recovery threshold compared to Polynomial code, with the expense of higher computational cost at each server. If both $\mathbf{A}$ and $\mathbf{B}$ admit arbitrarily (block) partitioning, entangled polynomial code \cite{yu2020straggler} and Generalized PolyDot \cite{dutta2018unified} give more flexible trade-off between recovery threshold and computation overhead.

Security is another major concern in designing distributed matrix multiplication schemes. \cite{chang2018capacity} first propose a one-sided secure scheme, namely keeping one of the matrices secure. Designing a scheme that is two-sided secure has been very popular. \cite{d2020gasp} proposes Gap Additive Secure Polynomial (GASP) code for polynomial code-like partitioning of matrices and achieves two-sided security. \cite{SDMMDFT} proposes a discrete Fourier transform based construction for inner product partitioning of matrices. There also have been works that deal with block partitioned coded matrix multiplication \cite{zhu2021improved,nodehi2018entangled,yu2020entangled,aliasgari2020private, li2022private}. 

Locally repairable codes (LRCs) are a kind of erause codes where a codeword symbol can be determined from a small set of coordinates, which are more efficient than classic MDS codes in terms of the communication cost during the regeneration of a codeword symbol. The $i$ th code symbol $c_{i}$ is said to have $(r, \delta)$ locality \cite{prakash2012optimal} if there exists an index set $S_{i}$ containing $i$ such that $\left|S_{i}\right| \leq r+\delta-1$ and each symbol $c_{j} ( j \in S_{i})$, can be reconstructed by any other $r$ symbols in $S_{i}.$ LRCs have been deployed in distributed storage \cite{prakash2012optimal,huang2012erasure} due to their small communication cost when repairing a failed node. Optimal constructions of LRCs can be found in \cite{tamo2014family} and the references therein. Recently, LRC has been introduced in coded matrix design
to help perform consecutive matrix multiplications \cite{jeong2018locally}.

There has also been works that exploit the structure of interpolation points \cite{fahim2021numerically, rudow2020locality, SDMMDFT}. To better demonstrate the effect of interpolation points, we generalize the notion of recovery threshold to include the case that some special subsets with smaller number of nodes may also be able to retrieve the desired matrix computation result. In this paper, we introduce a modulo-based approach that leads to better candidates for interpolation point selection. We apply it to many existing coded matrix designs and reduce their recovery threshold. Moreover, we propose a locality-based coded matrix design that can be combined with a modulo trick to give a reduced recovery threshold. We also discuss the applications of our proposed method to a type of matrix computing problem with a special structure, which includes problems like generalized linear models as special cases.  

The paper is organized as follows. Section \ref{sec:setting} introduces the problem settings. Section \ref{sec:prelim} discusses the needed knowledge that will be used in the rest of the paper, especially the modulo technique. Section \ref{sec:apply} explains how the modulo techniques benefit coded matrix design. Section \ref{structure mat mult} applies our constructions to a type of structured matrix multiplication problem. Section \ref{GLM} uses the generalized linear models as a special example of the problem introduced in section \ref{structure mat mult}. Section \ref{sec:compar} compares our method with existing state-of-the-art methods. Section \ref{sec:conclu} concludes this paper.

\section{Problem Setting} \label{sec:setting}
We shall follow a master-worker setting, there will be one powerful master node and $N$ worker nodes with limited storage capacity. Our goal is to parallel the costly process of computation matrix multiplication of matrix $\mathbf{A}$ and $\mathbf{B}$ to the workers. Matrix $\mathbf{A}$ will have shape $a \times b$, matrix $\mathbf{B}$ will have shape $b \times c$. We assume all matrices are on a finite field, the complex number field $\mathbf{C}$ case is similar.

\section{Preliminary} \label{sec:prelim}
In this section, we shall go through the basic concepts that we shall use in the paper.

\subsection{Secure Entangled Polynomial Code}
\label{block partition}

We shall partition the two matrices as follows,
\begin{equation}
 \mathbf{A}=\left[\begin{array}{ccc}
\mathbf{A}_{1,1} & \cdots & \mathbf{A}_{1, m} \\
\vdots & \ddots & \vdots \\
\mathbf{A}_{K_1,1} & \cdots & \mathbf{A}_{K_1, m}
\end{array}\right], 
\end{equation}
\begin{equation}
 \mathbf{B}=\left[\begin{array}{ccc}
\mathbf{B}_{1,1} & \cdots & \mathbf{B}_{1, K_2} \\
\vdots & \ddots & \vdots \\
\mathbf{B}_{m,1} & \cdots & \mathbf{B}_{m, K_2}
\end{array}\right].
\end{equation}

As the goal is to compute $\mathbf{A}\mathbf{B}$ distributedly and securely. Each worker $i$ will be sent encoded matrices $\tilde{\mathbf{A}}_i$ and $\tilde{\mathbf{B}}_i$. We shall then introduce the definition of security.

\begin{definition}
A distributed matrix multiplication scheme for calculating $\mathbf{A}\mathbf{B}$ is (information theoretic) $X$-secure if
\begin{equation}
\mathbf{I}\left(\mathbf{A}, \mathbf{B} ; \widetilde{\mathbf{A}}_{\mathcal{X}}, \widetilde{\mathbf{B}}_{\mathcal{X}}\right) = 0, 
\end{equation}
for any $\mathcal{X} \subset[N],|\mathcal{X}|=X$. $\widetilde{\mathbf{A}}_{\mathcal{X}} = \left\{\tilde{\mathbf{A}}_i\right\}_{i \in \mathcal{X}}$ denotes the encoded matrices indexed by set $\mathcal{X}$, $\widetilde{\mathbf{B}}_{\mathcal{X}}$ defines similarly.
\end{definition}

The matrix product can be written as follows,
\begin{equation}
\mathbf{C}=\left[\begin{array}{cccc}
\mathbf{C}_{1,1} & \mathbf{C}_{1,2} & \cdots & \mathbf{C}_{1, K_2} \\
\mathbf{C}_{2,1} & \mathbf{C}_{2,1} & \cdots & \mathbf{C}_{2, K_2} \\
\vdots & \vdots & \ddots & \vdots \\
\mathbf{C}_{K_1,1} & \mathbf{C}_{K_1,2} & \cdots & \mathbf{C}_{K_1, K_2}
\end{array}\right],    
\end{equation}
where each submatrix $\mathbf{C}_{i, j} = \sum_{k=1}^{m} \mathbf{A}_{i, k} \mathbf{B}_{k, j}$.

Assume $\mathbf{R}_i$ and $\mathbf{T}_i$ ($1 \leq i \leq X$) are random matrices that are sampled uniformly. Then polynomial-based encoding functions are given by

\begin{align}
p_{\mathbf{A}}(x) & =\sum_{j=1}^{K_1} \sum_{k=1}^{m} \mathbf{A}_{j, k} x^{\alpha_{j,k}} + \sum^{X}_{i=1} \mathbf{R}_i x^{\theta_i}, \\
p_{\mathbf{B}}(x) & =\sum_{j=1}^{m} \sum_{k=1}^{K_2} \mathbf{B}_{j, k} x^{\beta_{j,k}} + \sum^{X}_{i=1} \mathbf{T}_i x^{\eta_i}.
\end{align}

The product polynomial is given as:

\begin{align*}
  p_{\mathbf{A}}(x)p_{\mathbf{B}}(x) &=  \sum_{j=1}^{K_1} \sum_{k=1}^{m}\sum_{j'=1}^{m} \sum_{k'=1}^{K_2} \mathbf{A}_{j, k}\mathbf{B}_{j', k'} x^{\alpha_{j,k}+\beta_{j',k'}}    \\
  &+ \sum^{X}_{i=1}\sum_{j=1}^{m} \sum_{k=1}^{K_2}  \mathbf{R}_i \mathbf{B}_{j, k} x^{\theta_i+\beta_{j,k}} \\
  &+ \sum_{j=1}^{K_1} \sum_{k=1}^{m} \sum^{X}_{i=1} \mathbf{A}_{j, k}\mathbf{T}_i x^{\alpha_{j,k}+\eta_i} \\
  &+ \sum^{X}_{i=1}\sum^{X}_{i'=1} \mathbf{R}_i\mathbf{T}_{i'} x^{\theta_i+\eta_{i'}}.
\end{align*}

The goal is to choose suitable parameters $\alpha,\beta,\theta,\eta$ such that each $C_{i, j}$ is in the coefficient of product polynomial $p_{\mathbf{A}}(x)p_{\mathbf{B}}(x)$. Many state-of-the-art works follow this encoding polynomial designing strategy.

When $X>0$.
\begin{enumerate}
    \item Choose $\alpha_{j,k}=(k-1)+(j-1)m$, $\beta_{j,k}=(m-j)+(k-1)(K_1m+X)$, $\theta_i=K_1m+i-1$ and $\eta_i=(K_2-1)(K_1m+X)+K_1m+i-1$ or \\
    $\alpha_{k,j}=(j-1)+(k-1)(K_2m+X)$, $\beta_{j,k}=(m-j)+(j-1)m$, $\theta_i=(K_1-1)(K_2m+X)+K_2m+i-1$ and $\eta_i=K_2m+i-1$ will recover the secure entangled polynomial (SEP) code \cite{zhu2021improved}. The recovery threshold is given by $min \{(K_2+1)(K_1m+X)- 1, (K_1+1)(K_2m+X)- 1 \}$. When $K_1=K_2=1$ the secure entangled polynomial code is exactly the secure MatDot code\cite{aliasgari2020private}.
   \item Choose $\alpha_{j,k}=(k-1)+(j-1)m$, $\beta_{j,k}=(m-j)+(k-1)K_1m$, $\theta_i=K_1K_2m+i-1$ and $\eta_i=K_1K_2m+i-1$ will recover the polynomial sharing (PS) code \cite{nodehi2018entangled}. The recovery threshold is given by $2K_1K_2m+2X-1$. 
\end{enumerate}

When $X=0$.
\begin{enumerate}
    \item Choose $\alpha_{k,j}=(k-1)+(j-1)m$ and $\beta_{j,k}=(m-j)+(k-1)K_1m$ will recover the entangled polynomial code \cite{yu2020straggler}. The recovery threshold of entangled polynomial (EP) code is given by $K_1K_2m + m-1$. 
    \item  When $K_1=K_2=1$, the entangled polynomial code is MatDot code\cite{dutta2019optimal}. When $m=1$, the entangled polynomial code is Polynomial code\cite{yu2017polynomial}.
\end{enumerate}

Suppose there are $N$ distinct numbers $x_i$. The master shall assign each worker $i$ $\tilde{\mathbf{A}}_i=p_{\mathbf{A}}(x_i)$ and $\tilde{\mathbf{B}}_i=p_{\mathbf{B}}(x_i)$. Each worker $i$ shall compute $\tilde{\mathbf{A}}_i\tilde{\mathbf{B}}_i$ and return the result to the master. Once the master receives $deg(p_{\mathbf{A}}(x)p_{\mathbf{B}}(x))+1$ responses from workers, it can interpolate this polynomial and get all the coefficients to recover the matrix product $\mathbf{A}\mathbf{B}$.

We shall formally give the definition of recovery threshold as follows:

\begin{definition}
An integer $K$ is called the best recovery threshold for a distributed matrix multiplication scheme for calculating $\mathbf{A}\mathbf{B}$ if and only if $K$ is the minimum number that satisfies the following requirements on condition entropy: 
\begin{equation}
\mathbf{H}\left(\mathbf{A} \mathbf{B} \mid \{\widetilde{\mathbf{A}}_{i} \widetilde{\mathbf{B}}_{i}\}_{i \in \tau}\right) = 0,  
\end{equation}
for \textbf{some} subset of nodes $\mathcal{\tau}$. 

An integer $\bar K$ is called the worst recovery threshold for a distributed matrix multiplication scheme for calculating $\mathbf{A}\mathbf{B}$ if and only if $\bar K$ is the minimum number that satisfies the following requirements on condition entropy: 
\begin{equation}
\mathbf{H}\left(\mathbf{A} \mathbf{B} \mid \{\widetilde{\mathbf{A}}_{i} \widetilde{\mathbf{B}}_{i}\}_{i \in \bar \tau}\right) = 0,   
\end{equation}
for \textbf{any} subset of nodes $\mathcal{\bar \tau}$. 
\end{definition}

\subsection{Discrete Fourier Transform Matrix }

For any polynomial $p(x)=\sum^{K-1}_{i=0} a_{K-1-i}x^i$ of degree $K-1$, we shall first define the coefficient vector $a=(a_{K-1},\cdots,a_{0})$. Then we shall denote the evaluation value vector at $K$ distinct points $x_i$ ($1 \leq i \leq K$) as $f=(f(x_1),\cdots,f(x_K))$. It is straightforward to see that $f =a V(x_1,\cdots,x_K)$, where the matrix $V(x_1,\cdots,x_K)$ is the vandermonde matrix. 

Denote the primitive $k$-th root of unity as $\zeta$. We take the evaluation points $x_i = \zeta^i$, the Vandermonde matrix is exactly the discrete Fourier transform Matrix \cite{sundararajan2001discrete}. The discrete Fourier transform matrix $V$ has a special structure, namely its inverse matrix can be calculated in closed form $V^{-1}= \frac{1}{K}V(x^{-1}_1,\cdots,x^{-1}_K)$ \cite{preparata1977computational}. When the evaluation points are $(r\zeta)^i$, where $r$ is an arbitrary non-zero element, the inverse matrix has a similar closed form. 

We shall next present two lemmas that are closely related to the discrete Fourier transform matrix, which we shall call these lemmas modulo techniques.

\begin{lemma}
\label{lemma 1}
For any polynomial $A(x)= \sum^{n}_{i=0} a_{n-i} x^{i}$ of degree $n$, integer $k$ such that $n \geq k \geq n-s+1$, where $s \leq \frac{n}{2}$ is a given integer. When interpolating on the $k$-th root of unity set $\{ x^{k} = 1  \}$, we can effectively get the coefficients of $x^{j}$ ($s \leq j \leq n-s$) by the closed form expression of inverse discrete fourier transform matrix.
\end{lemma}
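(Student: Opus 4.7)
The plan is to interpret interpolation at the $k$-th roots of unity as recovering the remainder polynomial $\tilde{A}(x) := A(x) \bmod (x^k - 1)$ of degree at most $k-1$, and then to identify exactly which coefficients of $\tilde{A}$ coincide with the corresponding coefficients of the original $A$. The reason the middle band $[s, n-s]$ is special is that the reduction modulo $x^k-1$ only creates "collisions" among pairs of coefficients whose indices differ by $k$, and the hypotheses on $s$ and $k$ are tailored to ensure no such collision touches the middle band.

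First I would observe that on the set $\{x : x^k = 1\}$, every monomial satisfies $x^i \equiv x^{i \bmod k}$, so $A(\zeta^\ell) = \tilde{A}(\zeta^\ell)$ for all $\ell$, where $\zeta$ is a primitive $k$-th root of unity. Since $\deg \tilde{A} \leq k-1$ and we have $k$ distinct evaluation points, the closed-form inverse DFT cited earlier in the subsection recovers all $k$ coefficients of $\tilde{A}$ exactly. Explicitly, writing $A(x) = \sum_{i=0}^n a_{n-i} x^i$, we obtain
\begin{equation}
[x^j]\, \tilde{A}(x) \;=\; \sum_{\substack{0 \leq i \leq n \\ i \equiv j \,(\mathrm{mod}\, k)}} a_{n-i}, \qquad 0 \leq j \leq k-1.
\end{equation}

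Next I would exploit the two hypotheses. From $k \geq n-s+1$ together with $s \leq n/2$ one gets $k \geq n/2 + 1$, hence $2k > n$, which means each residue class modulo $k$ contributes at most two indices from $\{0,1,\ldots,n\}$, namely $j$ and (possibly) $j+k$. Now for any $j$ in the target band $s \leq j \leq n-s$, the inequality $j+k \geq s + (n-s+1) = n+1 > n$ shows that the second index $j+k$ falls outside $\{0,\ldots,n\}$. Consequently the sum above degenerates to a single term, giving $[x^j]\tilde{A}(x) = a_{n-j} = [x^j] A(x)$. Since $n-s \leq k-1$ by the same hypothesis, the whole band $[s, n-s]$ lies within the valid coefficient range of $\tilde{A}$, so the inverse DFT output directly yields $a_{n-s}, a_{n-s-1}, \ldots, a_{s}$.

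The main obstacle is purely bookkeeping: one must simultaneously use \emph{both} constraints, $k \geq n-s+1$ (to guarantee $n-s$ sits inside the range $[0,k-1]$ that inverse DFT can address) and $s \leq n/2$ (to guarantee $2k > n$, so that no index in the target band collides with a higher index under the mod-$k$ reduction). Lining these up is the entire content of the lemma; once done, the closed-form inverse DFT formula immediately delivers the desired coefficients.
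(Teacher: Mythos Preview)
Your proposal is correct and follows essentially the same route as the paper: both arguments reduce $A(x)$ modulo $x^k-1$, recover the remainder via inverse DFT, and then check that for $s\le j\le n-s$ no other index in $\{0,\dots,n\}$ is congruent to $j$ modulo $k$, so the $j$-th coefficient is unchanged. The only cosmetic difference is that the paper writes out the quotient and remainder explicitly from Euclidean division, whereas you phrase the same computation as the ``sum over the residue class'' formula for $[x^j]\tilde A(x)$.
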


\begin{proof}
From Euclidean division theorem, we shall know that there exists quotient polynomial $Q(x) = \sum^{n-k}_{i=0} a_{n-k-i} x^{i}$ and remainder polynomial $R(x) =  \sum^{k-1}_{i=n+1-k} a_{n-i} x^{i} + \sum^{n-k}_{i=0} ( a_{n-k-i} +  a_{n-i}) x^{i}$ which satisfy the relation $A(x)=(x^{k}-1)Q(x)+R(x)$. $R(x)$ is exactly the modulo result when $A(x)$ modulo $x^k - 1$. Note the coefficient of $x^{j}$ ($n+1-k \leq j \leq k-1$) in $R(x)$ remains the same as $A(x)$. As $n \geq k \geq n-s+1$, we know $k-1 \geq n-s$ and $n+1-k \leq s$. Thus the coefficient of $x^{j}$ ($s \leq j \leq n-s$) in $R(x)$ remains the same as $A(x)$.

For any $x_i \in \{ x^{k} = 1  \}$, we know that $A(x_i)=R(x_i)$. Whenever we get all the $k$ evaluation results of $A(x)$ at the set $\{ x^{k} = 1  \}$, it is equivalent to getting the evaluation results of $R(x)$. Since the degree of $R(x)$ is $k-1$, we can interpolate the whole polynomial $R(x)$, thus getting the coefficient of $x^{j}$ ($s \leq j \leq n-s$), which is the same for $A(x)$. Note the interpolation points are just the $k$-th root of unity set, each coefficient can be decoded effectively by the closed form expression of inverse discrete fourier transform matrix.

\end{proof}

\textbf{Remark 1}: It is worth noticing the case when we get the evaluation results on $\{ x^{k} = \gamma  \}$, where $\gamma$ is an arbitrary nonzero element. Note $\{ x^{k} = \gamma  \}$ is just a multiplicative coset of the $k$-th root of unity set, we have similar results as lemma \ref{lemma 1}.

Sometimes we are interested in multiple coefficients with a special pattern, we shall next derive the second lemma that aims at getting them from the interpolation results at the root of unity set.

\begin{lemma}
\label{lemma 2}
For any uni-variate polynomial $A(x)= \sum^{(k+1)p-2}_{i=0} a_{(k+1)p-2-i} x^{i}$ of degree $(k+1)p-2$ ($k \geq 1$), redefine $b_j=a_{(k+1)p-2-jp-p+1}$ ($j= 0, \cdots,k-1$). Define another uni-variate polynomial as $B(x) = \sum^{k-1}_{j=0} b_j x^{j}$. Then for any nonzero element $\gamma$, when interpolating on the set $\{ x^{p} = \gamma  \}$, we can effectively get the evaluation of $B(x)$ at $\gamma$. A direct implication is that we can get coefficients $b_j$s from the evaluation results of $A(x)$ on the $kp$-th root of unity set.   
\end{lemma}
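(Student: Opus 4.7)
The plan is to mimic the modulo argument of Lemma \ref{lemma 1}: work with the remainder $R(x) = A(x) \bmod (x^p - \gamma)$, identify one of its coefficients as $B(\gamma)$, and then recover it by interpolating $R$ from the $p$ evaluations of $A$ on the set $\{x^p = \gamma\}$.

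First I would rewrite $A$ in ``natural'' indexing $A(x) = \sum_{i=0}^{(k+1)p-2} c_i x^i$ with $c_i := a_{(k+1)p-2-i}$, and apply the reductions $x^{jp+r} \equiv \gamma^j x^r \pmod{x^p - \gamma}$ for $0 \le r < p$. This yields
\begin{equation*}
R(x) \;=\; \sum_{r=0}^{p-1} \Bigl(\, \sum_{\substack{j \ge 0 \\ jp+r \le (k+1)p-2}} c_{jp+r}\, \gamma^j \Bigr)\, x^r.
\end{equation*}
Zooming in on the slot $r = p - 1$, the constraint $jp + p - 1 \le (k+1)p - 2$ forces $j \le k - 1$, so the coefficient of $x^{p-1}$ in $R(x)$ equals $\sum_{j=0}^{k-1} c_{(j+1)p - 1}\, \gamma^j$. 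Converting back to $a$-indexing, $c_{(j+1)p - 1} = a_{(k+1)p - 2 - (j+1)p + 1} = a_{(k-j)p - 1}$, which is exactly $b_j$ by the lemma's definition. Hence the coefficient of $x^{p-1}$ in $R(x)$ is $\sum_{j=0}^{k-1} b_j \gamma^j = B(\gamma)$.

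With this identification in hand, the rest is an interpolation step parallel to the proof of Lemma \ref{lemma 1}. Since $A(x) = (x^p - \gamma)\, Q(x) + R(x)$, the values of $A$ and $R$ coincide on every root of $x^p - \gamma$. The set $\{x : x^p = \gamma\}$ is a multiplicative coset of the $p$-th roots of unity and therefore contains $p$ distinct points; because $\deg R \le p - 1$, these $p$ evaluations determine $R$ completely, and the inverse DFT closed-form on such a coset (per Remark 1) extracts each coefficient of $R$ efficiently — in particular the coefficient of $x^{p-1}$, yielding $B(\gamma)$. For the final sentence of the lemma, I would decompose the $kp$-th roots of unity as the disjoint union of the $k$ cosets $\{x : x^p = \gamma_\ell\}$ with $\gamma_\ell$ running over the $k$-th roots of unity, apply the above to each coset to obtain $B(\gamma_0), \ldots, B(\gamma_{k-1})$, and then use a second inverse DFT on these $k$ points — legitimate since $\deg B \le k-1$ — to recover every $b_j$.

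The main obstacle I anticipate is purely bookkeeping: the reversed-indexing convention $c_i = a_{(k+1)p-2-i}$ combined with the formula $b_j = a_{(k+1)p-2-jp-p+1}$ makes it easy to miscount when verifying the identity $c_{(j+1)p-1} = b_j$, so that index chase has to be done carefully. One should also confirm that the underlying field actually contains $p$ distinct $p$-th roots of $\gamma$ and $k$ distinct $k$-th roots of unity, which is the same genericity hypothesis implicit in Lemma \ref{lemma 1}. Once these bookkeeping points are pinned down, the argument just repackages Lemma \ref{lemma 1} with the cosets $\{x^p = \gamma\}$ in place of the $k$-th roots of unity.
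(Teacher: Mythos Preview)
Your proposal is correct and follows essentially the same approach as the paper's proof: both reduce $A(x)$ modulo $x^p-\gamma$, identify the coefficient of $x^{p-1}$ in the remainder as $B(\gamma)$, and then decompose the $kp$-th roots of unity into $k$ cosets $\{x^p=\zeta^i\}$ to obtain enough evaluations of $B$ for a second inverse DFT. The only difference is that you spell out the index bookkeeping and the remainder computation explicitly, whereas the paper simply invokes ``a similar argument of Lemma~\ref{lemma 1}'' and states the outcome.
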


\begin{proof}
Note when interpolating on the set $\{x^{p} = \gamma  \}$, from a similar argument of lemma \ref{lemma 1}, we can get the coefficient of $x^{p-1}$ in the remainder polynomial i.e. $\sum^{k-1}_{j=0} (\gamma)^{j}b_j$. From the definition of $B(x)$, this is exactly $B(\gamma)$. 

If we denote $\zeta$ as the primitive $k$-th root of unity, then taking $\gamma_i = \zeta^i$ will make $\{ x^{p} = \gamma_i  \}$ ($i= 0, \cdots,k-1$) exactly recover the $kp$-th root of unity set. From the evaluation results of $A(x)$ on the $kp$-th root of unity, we can get $B(\gamma_i)$s easily. As what we want are just the coefficients of $B(x)$, we can decode them efficiently from fast Fourier transform \cite{preparata1977computational}.
\end{proof}

\section{Modulo techniques for coded matrix matrix multiplication} \label{sec:apply}

\subsection{Apply Modulo Techniques to Secure Matrix Multiplication} \label{SEP}

\subsubsection{Apply modulo techniques to existing secure matrix multiplication schemes}

The modulo techniques introduced in lemma \ref{lemma 1} can be applied to secure entangled polynomial code and polynomial sharing code \ref{block partition}.

\begin{theorem} \label{SEP theorem}
    Denote $\hat A=min \{(K_2+1)(K_1m+X)- 1, (K_1+1)(K_2m+X)- 1, 2K_1K_2m+2X-1 \}$. When $N \geq \hat A$, assume the $\hat A-m+1$-th root of unity set is a subset of the evaluation set. Then the worst recovery threshold is $\hat A$, and the best recovery threshold is $\hat A-m+1$.
    
\end{theorem}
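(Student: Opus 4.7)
The strategy is to analyze the three polynomial encodings recalled in Section \ref{block partition} whose standard recovery thresholds appear in the definition of $\hat{A}$, and then invoke Lemma \ref{lemma 1} to shave off $m-1$ evaluations by placing the evaluation points on a root-of-unity set.

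First I would dispatch the worst recovery threshold. For each of the two SEP variants and the PS code, a direct computation shows that the corresponding product polynomial $p_{\mathbf{A}}(x)p_{\mathbf{B}}(x)$ has degree exactly one less than the matching term in $\hat{A}$. Regardless of which of the three terms achieves the minimum, the master can Lagrange-interpolate from any $\hat{A}$ evaluations and in particular read off every coefficient carrying a $\mathbf{C}_{i,j}$; conversely, for a generic subset of fewer than $\hat{A}$ evaluations the relevant coefficients are under-determined, so $\hat{A}$ is tight as the worst recovery threshold.

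The main work is the best recovery threshold. I would fix whichever of the three encodings attains $\hat{A}$. The key observation, to be verified separately in each case, is that the exponents $\alpha_{i,k}+\beta_{k,j}$ which, after summation over the inner index $k$, carry $\mathbf{C}_{i,j}$, all lie in the symmetric middle interval $[m-1,\,\hat{A}-m]$ of $[0,\hat{A}-1]$. For instance, in the first SEP choice the exponent simplifies to $m-1+(i-1)m+(j-1)(K_1m+X)$, which is independent of $k$ and whose maximum $K_1m-1+(K_2-1)(K_1m+X)$ is at most $\hat{A}-m$ because the security slack $X$ plus one more block of size $m$ sits above it. The symmetric bookkeeping settles the second SEP choice and the PS code. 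With this in hand, Lemma \ref{lemma 1} applies with $n=\hat{A}-1$, $s=m-1$ and $k=\hat{A}-m+1$: the side condition $s\le n/2$ reduces to $\hat{A}\ge 2m-1$, which is implied by $K_1,K_2\ge 1$. Evaluating on the assumed $(\hat{A}-m+1)$-th roots of unity and using the closed form of the inverse DFT matrix, the master recovers all middle coefficients, and hence every $\mathbf{C}_{i,j}$.

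The step I expect to be the main obstacle is the matching lower bound for the best recovery threshold, namely that no subset of size smaller than $\hat{A}-m+1$ suffices. My plan here is a dimension-counting argument: the $\mathbf{C}_{i,j}$ information lives in the $\hat{A}-m+1$ generically independent coefficients of $R(x)=p_{\mathbf{A}}(x)p_{\mathbf{B}}(x)\bmod (x^{\hat{A}-m+1}-1)$, where the independence is ensured by the uniformly random secret matrices $\mathbf{R}_i,\mathbf{T}_i$. Any set of fewer evaluations gives strictly fewer independent linear constraints on these coefficients than the number of unknowns needed to pin down $\mathbf{A}\mathbf{B}$, so recovery must fail. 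Combining this with the achievable construction above yields the claimed best recovery threshold $\hat{A}-m+1$.
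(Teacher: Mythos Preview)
Your achievability argument for both thresholds is essentially the paper's. The paper is terser: it records that each $\mathbf{C}_{i,j}$ sits at exponent $m-1+(i-1)m+(j-1)(K_1m+X)$ (for the displayed SEP variant), observes that the first $m-1$ and last $m-1$ coefficients of $p_{\mathbf{A}}p_{\mathbf{B}}$ carry no $\mathbf{C}_{i,j}$, and invokes Lemma~\ref{lemma 1}. Your explicit check that all the relevant exponents lie in $[m-1,\hat{A}-m]$ across the three encodings, and that the side condition $s\le n/2$ of Lemma~\ref{lemma 1} holds, is more careful than the paper, but the underlying mechanism is identical.

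Where you go beyond the paper is the final paragraph, and that is where the gap sits. The paper does \emph{not} prove a matching lower bound for the best recovery threshold; it only exhibits the root-of-unity subset of size $\hat{A}-m+1$ and declares that the best recovery threshold. Your dimension-counting sketch does not establish the lower bound as written, for two reasons. First, the master only needs the $K_1K_2$ specific coefficients carrying the $\mathbf{C}_{i,j}$, not all $\hat{A}-m+1$ coefficients of $R(x)$; having fewer than $\hat{A}-m+1$ linear constraints does not by itself preclude recovering those $K_1K_2$ particular linear functionals. Second, the passage to $R(x)=p_{\mathbf{A}}p_{\mathbf{B}}\bmod(x^{\hat{A}-m+1}-1)$ is a feature of the specific root-of-unity subset; an arbitrary subset $\tau$ of the evaluation set may contain points outside the roots of unity, and each such evaluation constrains all $\hat{A}$ coefficients of $p_{\mathbf{A}}p_{\mathbf{B}}$, not merely those of $R(x)$. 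A genuine lower bound would have to show that the random padding $\mathbf{R}_i,\mathbf{T}_i$ information-theoretically masks $\mathbf{A}\mathbf{B}$ from \emph{every} size-$(\hat{A}-m)$ subset of the chosen evaluation set, which is a different and harder statement than a rank count. For the purpose of matching the paper you can simply drop that paragraph; if you want to keep a lower-bound claim, it needs a separate argument.
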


\begin{proof}
The encoding functions are the same as secure entangled polynomial code and polynomial sharing code, thus having the worst recovery threshold of $\hat{A}$. 

We shall next present the product polynomial $p_{\mathbf{A}}(x)p_{\mathbf{B}}(x)$ of one of the constructions in \cite{zhu2021improved}, the other cases are similar.

\begin{align*}
  &p_{\mathbf{A}}(x)p_{\mathbf{B}}(x) \\
  =&  \sum_{j=1}^{K_1} \sum_{k=1}^{m}\sum_{j'=1}^{m} \sum_{k'=1}^{K_2} \\
  &\mathbf{A}_{j, k}\mathbf{B}_{j', k'} x^{(k-1)+(j-1)m+(m-j')+(k'-1)(K_1m+X)}    \\
  &+ \sum^{X}_{i=1}\sum_{j=1}^{m} \sum_{k=1}^{K_2}  \mathbf{R}_i \mathbf{B}_{j, k} x^{K_1m+i-1+(m-j)+(k-1)(K_1m+X)} \\
  &+ \sum_{j=1}^{K_1} \sum_{k=1}^{m} \sum^{X}_{i=1} \\ 
  &\mathbf{A}_{j, k}\mathbf{T}_i x^{(k-1)+(j-1)m+(K_2-1)(K_1m+X)+K_1m+i-1} \\
  &+ \sum^{X}_{i=1}\sum^{X}_{i'=1} \mathbf{R}_i\mathbf{T}_{i'} x^{K_1m+i-1+(K_2-1)(K_1m+X)+K_1m+i'-1}.
\end{align*}

Note $\mathbf{C}_{i, j} = \sum_{k=1}^{m} \mathbf{A}_{i, k} \mathbf{B}_{k, j}$, it is the coefficient of the $m-1+(i-1)m+(j-1)(K_1m+X)$-th degree term. The first $m-1$ coefficients in $p_{\mathbf{A}}(x)p_{\mathbf{B}}(x)$ are not needed and the last $m-1$ coefficients are random noises, thus can be alleviated by modulo. Applying lemma \ref{lemma 1} to each element of the product matrix polynomial will give the best recovery threshold of $\hat{A}-m+1$. As the degrees of random matrices are consecutive, the security follows from \cite{chang2018capacity}.
\end{proof}

\textbf{Remark 2}: When $K_1=1, K_2=1$, the best recovery threshold is $m+2X$, matching the one given by discrete Fourier transform based code \cite{SDMMDFT}.

\textbf{Example 1}: ($K_1=K_2=m=2, X=2, N=16$)

\begin{equation*}
 \mathbf{A}=\left[\begin{array}{cc}
\mathbf{A}_{1,1} &  \mathbf{A}_{1, 2} \\
\mathbf{A}_{2,1} & \mathbf{A}_{2, 2}
\end{array}\right], 
\end{equation*}
\begin{equation*}
 \mathbf{B}=\left[\begin{array}{cc}
\mathbf{B}_{1,1} &  \mathbf{B}_{1, 2} \\
\mathbf{B}_{2,1} & \mathbf{B}_{2, 2}.
\end{array}\right]
\end{equation*}

It is clear that 
\begin{equation*}
  \mathbf{A}\mathbf{B} =   \left[\begin{array}{cc}
\mathbf{A}_{1,1}\mathbf{B}_{1,1}+\mathbf{A}_{1,2}\mathbf{B}_{2,1} &  \mathbf{A}_{1,2}\mathbf{B}_{2,2}+\mathbf{A}_{1,1}\mathbf{B}_{1,2} \\
\mathbf{A}_{2,1}\mathbf{B}_{1,1}+\mathbf{A}_{2,2}\mathbf{B}_{2,1} & \mathbf{A}_{2,1}\mathbf{B}_{1,2}+\mathbf{A}_{2,2}\mathbf{B}_{2,2}
\end{array}\right].
\end{equation*}

From the 

The encoding polynomials are as follows:
\begin{align*}
&p_{\mathbf{A}}(x) \hspace{-1mm} = \hspace{-1mm}\mathbf{A}_{1,1}\hspace{-1mm}  +\hspace{-1mm}\mathbf{A}_{1,2}x\hspace{-1mm}  +\hspace{-1mm}\mathbf{A}_{2,1}x^{2}\hspace{-1mm}  +\hspace{-1mm}\mathbf{A}_{2,2}x^{3}\hspace{-1mm}  +\hspace{-1mm}\mathbf{R}_1 x^4\hspace{-1mm}  +\hspace{-1mm}\mathbf{R}_2 x^5. \\
& p_{\mathbf{B}}(x) \hspace{-1mm} = \hspace{-1mm}\mathbf{B}_{1,1}x\hspace{-1mm}  +\hspace{-1mm}\mathbf{B}_{2,1}\hspace{-1mm}  +\hspace{-1mm}\mathbf{B}_{1,2}x^{7}\hspace{-1mm}  +\hspace{-1mm}\mathbf{B}_{2,2}x^{6}\hspace{-1mm}  +\hspace{-1mm}\mathbf{T}_1 x^{10}\hspace{-1mm}  +\hspace{-1mm}\mathbf{T}_2 x^{11}.
\end{align*}

Thus the product $p_{\mathbf{A}}(x)p_{\mathbf{B}}(x)$ shall rewritten as follows,
\begin{align*}
& \mathbf{A}_{1,1}\mathbf{B}_{2,1} +(\mathbf{A}_{1,1}\mathbf{B}_{1,1}+\mathbf{A}_{1,2}\mathbf{B}_{2,1})x +   \\& (\mathbf{A}_{1,2}\mathbf{B}_{1,1}+\mathbf{A}_{2,1}\mathbf{B}_{2,1})x^2 
+(\mathbf{A}_{2,1}\mathbf{B}_{1,1}+\mathbf{A}_{2,2}\mathbf{B}_{2,1})x^3 + \cdots  \\
 &
 +(\mathbf{A}_{1,1}\mathbf{B}_{1,2} + \mathbf{A}_{1,2}\mathbf{B}_{2,2})x^7 + (\mathbf{A}_{2,1}\mathbf{B}_{2,2} + \mathbf{A}_{1,2}\mathbf{B}_{1,2})x^8 \\
&+(\mathbf{A}_{2,1}\mathbf{B}_{1,2}+\mathbf{A}_{2,2}\mathbf{B}_{2,2})x^9 + \cdots + \mathbf{R}_2 \mathbf{T}_2 x^{16}.
\end{align*}

The polynomial is of degree $16$, the first and last coefficients are noises. The needed degree is $1,3,7,9$. Thus interpolating at the $16$-th root of unity set is sufficient to get the desired matrix product $\mathbf{A}\mathbf{B}$ by lemma \ref{lemma 1}.

\subsubsection{A new construction of secure matrix multiplication} \label{transpose sep code}

We discussed how to apply our modulo technique to existing constructions, but a natural question arises. Can we construct new codes that have the same or better worst recovery threshold but with a better best recovery threshold? We answer this question in the affirmative. 

To give an improved construction, we shall analyze the existing construction SEP code carefully. Without loss of generality, assume $(K_2+1)(K_1m+X) < (K_1+1)(K_2m+X)$. Thus the encoding polynomials are given as follows:

\begin{align}
p_{\mathbf{A}}(x) & =\sum_{j=1}^{K_1} \sum_{k=1}^{m} \mathbf{A}_{j, k} x^{(k-1)+(j-1)m} + \sum^{X}_{i=1} \mathbf{R}_i x^{K_1m+i-1},\\
p_{\mathbf{B}}(x) & =\sum_{j=1}^{m} \sum_{k=1}^{K_2} \mathbf{B}_{j, k} x^{(m-j)+(K_1m+X)(k-1)} \nonumber \\
& \quad + \sum^{X}_{i=1} \mathbf{T}_i x^{(K_2-1)(K_1m+X)+K_1m+i-1}.
\end{align}

The minimal possible degree that $\mathbf{C}_{i, j}$ appears in the SEP code is $m-1$, thus limiting the performance of the best recovery threshold. By taking the random matrices $\mathbf{R}_i$ into account, there will be a total of $K_1m+X$ matrices involved in the design of encoding polynomials. Thus the minimum possible degree construction for $p_{\mathbf{A}}(x)$ has a degree of $K_1m+X-1$, which involves $K_1m+X$ consecutive degrees. But the summation order of $\mathbf{A}_{j, k}$ may not be canonical in the SEP construction. Note in SEP code, the submatrices of $\mathbf{A}$ are added in the ascending order along the rows. By changing the summation order along the columns will give rise to another encoding polynomial for $\mathbf{A}$ with the same degree. To construct $p_\mathbf{B}(x)$, we need to ensure that $\mathbf{C}_{i, j}$ lies in the coefficients of the product polynomial. Note when $\mathbf{A}_{i, j}$s are encoded along the column, the encoding power corresponding to $\mathbf{A}_{i+1, j}$ will be bigger by $K_1$ than that of $\mathbf{A}_{i, j}$. From the definition of $\mathbf{C}$, the encoding power corresponding to $\mathbf{B}_{i, j}$ will be bigger by $K_1$ than that of $\mathbf{B}_{i, j+1}$. And the random noise terms in $p_{\mathbf{B}}(x)$ shall have consecutive degrees and make the noise term does not interface the needed $\mathbf{C}$ in the product polynomial.

To summarize our construction, recall that we shall admit block partitioning of matrices $\mathbf{A}$ and $\mathbf{B}$ as follows,
\begin{equation*}
 \mathbf{A}=\left[\begin{array}{ccc}
\mathbf{A}_{1,1} & \cdots & \mathbf{A}_{1, m} \\
\vdots & \ddots & \vdots \\
\mathbf{A}_{K_1,1} & \cdots & \mathbf{A}_{K_1, m}
\end{array}\right], 
\end{equation*}
\begin{equation*}
 \mathbf{B}=\left[\begin{array}{ccc}
\mathbf{B}_{1,1} & \cdots & \mathbf{B}_{1, K_2} \\
\vdots & \ddots & \vdots \\
\mathbf{B}_{m,1} & \cdots & \mathbf{B}_{m, K_2}
\end{array}\right].
\end{equation*}

Assume $\mathbf{R}_i$ and $\mathbf{T}_i$ ($1 \leq i \leq X$) are random matrices that are sampled uniformly from $F_q$. Then polynomial-based encoding functions are given by

\begin{align}
p_{\mathbf{A}}(x) & =\sum_{j=1}^{K_1} \sum_{k=1}^{m} \mathbf{A}_{j, k} x^{(k-1)K_1+j-1} + \sum^{X}_{i=1} \mathbf{R}_i x^{K_1m+i-1},\\
p_{\mathbf{B}}(x) & =\sum_{j=1}^{m} \sum_{k=1}^{K_2} \mathbf{B}_{j, k} x^{(m-j)K_1+(K_1m+X)(k-1)} \nonumber \\
& \quad + \sum^{X}_{i=1} \mathbf{T}_i x^{(K_2-1)(K_1m+X)+K_1m+i-1}.
\end{align}

Interestingly, reversing the role of $\mathbf{A}$ and $\mathbf{B}$ shall give another construction.

\begin{align}
p_{\mathbf{A}}(x) & =\sum_{j=1}^{K_1} \sum_{k=1}^{m} \mathbf{A}_{j, k} x^{(k-1)K_2+(K_2m+X)(j-1)} \nonumber\\
& \quad + \sum^{X}_{i=1} \mathbf{R}_i x^{(K_1-1)(K_2m+X)+K_2m+i-1} ,\\
p_{\mathbf{B}}(x) & =\sum_{j=1}^{m} \sum_{k=1}^{K_2} \mathbf{B}_{j, k} x^{(m-j)K_2+k-1} + \sum^{X}_{i=1} \mathbf{T}_i x^{K_2m+i-1}.
\end{align}

\begin{theorem} \label{non DFT transpose SEP}
The proposed scheme is $X$-secure and will have a recovery threshold of $min \{ (K_2+1)(K_1m+X)- 1, (K_1+1)(K_2m+X)- 1 \}$. 
\end{theorem}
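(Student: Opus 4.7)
The plan is to establish the two required properties for the first construction, and then obtain the symmetric bound by the dual construction in which the roles of $\mathbf{A}$ and $\mathbf{B}$ (and of $K_1,K_2$) are swapped. For the recovery threshold, first observe that the largest degree in $p_{\mathbf{A}}(x)$ is $K_1m+X-1$ and the largest degree in $p_{\mathbf{B}}(x)$ is $(K_2-1)(K_1m+X)+K_1m+X-1=K_2(K_1m+X)-1$, so $\deg(p_{\mathbf{A}}p_{\mathbf{B}})=(K_2+1)(K_1m+X)-2$ and $(K_2+1)(K_1m+X)-1$ evaluations suffice for interpolation. Next, I will show that each $\mathbf{C}_{i,j}$ appears as the coefficient of a specific monomial. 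The ``diagonal'' cross term $\mathbf{A}_{i,k}\mathbf{B}_{k,j}$ contributes degree $(k-1)K_1+(i-1)+(m-k)K_1+(K_1m+X)(j-1)=(m-1)K_1+(i-1)+(K_1m+X)(j-1)$, which is independent of $k$, so summing over $k$ delivers $\mathbf{C}_{i,j}$. These target degrees are distinct across $(i,j)$ because $i-1\in[0,K_1-1]$ sits as the low-order ``digit'' in a base-$(K_1m+X)$ expansion.

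The main obstacle is the non-interference check: I must verify that no off-diagonal $\mathbf{A}\mathbf{B}$ term, no $\mathbf{R}\mathbf{B}$ term, no $\mathbf{A}\mathbf{T}$ term, and no $\mathbf{R}\mathbf{T}$ term lands on a target degree $(m-1)K_1+(i-1)+(K_1m+X)(j-1)$. Each case reduces to an equation of the form (bounded integer) $=(K_1m+X)\cdot(\text{integer})$; after bounding the left side inside $(-(K_1m+X),K_1m+X)$ using $|(k-k')K_1|\leq(m-1)K_1<K_1m+X$ and the ranges of the remaining indices, one concludes that the integer on the right must be zero, which either collapses the case onto a diagonal term or forces an out-of-range parameter. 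For the noise cross terms, the deliberate placement of the $\mathbf{R}$-degrees in the block $[K_1m,K_1m+X-1]$ and the shift by $(K_2-1)(K_1m+X)$ in the $\mathbf{T}$-degrees pushes every such product strictly above or below every target degree. This enumeration is routine but constitutes the bulk of the work.

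For $X$-security, observe that $\mathbf{R}_1,\ldots,\mathbf{R}_X$ appear in $p_{\mathbf{A}}(x)$ at the consecutive powers $x^{K_1m},x^{K_1m+1},\ldots,x^{K_1m+X-1}$, and similarly the $\mathbf{T}_i$'s in $p_{\mathbf{B}}(x)$ appear at $X$ consecutive powers. Fixing $\mathbf{A}$ and any set $\mathcal{X}$ of $X$ workers, the map $(\mathbf{R}_1,\ldots,\mathbf{R}_X)\mapsto\{\widetilde{\mathbf{A}}_i\}_{i\in\mathcal{X}}$ is given by a diagonally-scaled Vandermonde matrix on $X$ consecutive columns, which is nonsingular. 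Since the $\mathbf{R}_i$'s are uniform and independent of $\mathbf{A}$, the shares $\widetilde{\mathbf{A}}_{\mathcal{X}}$ are uniform and independent of $\mathbf{A}$; the parallel argument handles $\widetilde{\mathbf{B}}_{\mathcal{X}}$. Combining the two yields $\mathbf{I}(\mathbf{A},\mathbf{B};\widetilde{\mathbf{A}}_{\mathcal{X}},\widetilde{\mathbf{B}}_{\mathcal{X}})=0$ as in \cite{chang2018capacity}. Running the identical analysis on the symmetric construction produces the other bound $(K_1+1)(K_2m+X)-1$, and taking the minimum gives the claimed recovery threshold.
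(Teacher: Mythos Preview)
Your proposal is correct and follows essentially the same approach as the paper: compute $\deg(p_{\mathbf{A}}p_{\mathbf{B}})=(K_2+1)(K_1m+X)-2$, identify the target exponents $(m-1)K_1+(i-1)+(j-1)(K_1m+X)$ for $\mathbf{C}_{i,j}$, rule out interference from the $\mathbf{R}\mathbf{B}$, $\mathbf{A}\mathbf{T}$, and $\mathbf{R}\mathbf{T}$ cross terms via a base-$(K_1m+X)$ remainder/ordering argument, and invoke the consecutive-degree Vandermonde argument of \cite{chang2018capacity} for $X$-security. The only minor difference is that you explicitly single out the off-diagonal $\mathbf{A}_{j,k}\mathbf{B}_{j',k'}$ terms with $k\neq j'$ as a separate case, which the paper folds into its statement that $\mathbf{C}_{j,k'}$ is ``the'' coefficient of the target monomial in equation~(\ref{first term}); your more systematic enumeration is a welcome clarification but not a different method.
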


\begin{proof}

Without loss of generality, assume $(K_2+1)(K_1m+X) < (K_1+1)(K_2m+X)$, thus the product polynomial $p_{\mathbf{A}}(x)p_{\mathbf{B}}(x)$ is given as follows:

\begin{align}
  &p_{\mathbf{A}}(x)p_{\mathbf{B}}(x) \nonumber \\
  =&  \sum_{j=1}^{K_1} \sum_{k=1}^{m}\sum_{j'=1}^{m} \sum_{k'=1}^{K_2} \nonumber \\
  &\mathbf{A}_{j, k}\mathbf{B}_{j', k'} x^{(k-1)K_1+(j-1)+(m-j')K_1+(k'-1)(K_1m+X)} \label{first term}   \\
  &+ \sum^{X}_{i=1}\sum_{j=1}^{m} \sum_{k=1}^{K_2}  \mathbf{R}_i \mathbf{B}_{j, k} x^{K_1m+i-1+(m-j)K_1+(k-1)(K_1m+X)} \label{second term}  \\
  &+ \sum_{j=1}^{K_1} \sum_{k=1}^{m} \sum^{X}_{i=1} \nonumber \\ 
  &\mathbf{A}_{j, k}\mathbf{T}_i x^{(k-1)K_1+(j-1)+(K_2-1)(K_1m+X)+K_1m+i-1} \label{third term}  \\
  &+ \sum^{X}_{i=1}\sum^{X}_{i'=1} \mathbf{R}_i\mathbf{T}_{i'} x^{K_1m+i-1+(K_2-1)(K_1m+X)+K_1m+i'-1} \label{fourth term}. 
\end{align}   

From the definition of $\mathbf{C}_{j, k^{\prime}}$, we know $\mathbf{C}_{j, k^{\prime}}$ is the coefficient of $x^{(m-1)K_1 + (j-1) + (k^{\prime}-1)(K_1m + X)}$ in equation (\ref{first term}) in the product polynomial. We then show that the powers of x appeared in equations (\ref{second term}), (\ref{third term}) and (\ref{fourth term}) shall not contain the needed information to retrieve $\mathbf{C}_{j, k^{\prime}}$. It is clear that the maximal possible power that $\mathbf{C}_{j, k^{\prime}}$ may appear is $K_1m-1+(K_2-1)(K_1m+X)$. This power is strictly smaller than any of the terms that appeared in equations (\ref{third term}) and (\ref{fourth term}). Thus we shall only consider the terms in equation (\ref{second term}) carefully. 

Note the powers that $\mathbf{C}_{j, k^{\prime}}$ appear are $(m-1)K_1 + (j-1) + (k^{\prime}-1)(K_1m + X)$ ($1 \leq j \leq K_1$, $1 \leq k^{\prime} \leq K_2$). As $0 \leq (m-1)K_1 \leq (m-1)K_1 + (j-1) \leq K_1m-1 < K_1m \leq K_1m+X$, thus we can treat $(m-1)K_1 + (j-1)$ as a remainder term. We then analyze the term in equation $(\ref{second term})$ similarly. The powers of $x$ that appear in the equation $(\ref{second term})$ are $K_1m+i-1+(m-j)K_1+(k-1)(K_1m+X)$ ($1 \leq i \leq X$, $1 \leq j \leq m$ and $1 \leq k \leq K_2$). As $K_1m \leq K_1m+i-1+(m-j)K_1 \leq K_1m+X-1+(m-1)K_1 \leq K_1m+X +(m-1)K_1 -1$. Thus $\mathbf{C}_{j, k^{\prime}}$ lies only in the first term. And the conclusion follows.

For the security concerns, as the degrees of random matrices are consecutive, the security follows from \cite{chang2018capacity}.

\end{proof}

\textbf{Example 2}: ($K_1=m=2, K_2=3, X=1$)

\begin{equation*}
 \mathbf{A}=\left[\begin{array}{cc}
\mathbf{A}_{1,1} &  \mathbf{A}_{1, 2} \\
\mathbf{A}_{2,1} & \mathbf{A}_{2, 2}
\end{array}\right], 
\end{equation*}
\begin{equation*}
 \mathbf{B}=\left[\begin{array}{ccc}
\mathbf{B}_{1,1} &  \mathbf{B}_{1, 2} &  \mathbf{B}_{1, 3} \\
\mathbf{B}_{2,1} & \mathbf{B}_{2, 2} &  \mathbf{B}_{2, 3}
\end{array}\right].
\end{equation*}

Denote the product matrix $\mathbf{C}$ as follows:
\begin{equation*}
  \mathbf{C} =   \left[\begin{array}{ccc}
\mathbf{C}_{1,1} &  \mathbf{C}_{1,2} & \mathbf{C}_{1,3} \\
\mathbf{C}_{2,1} & \mathbf{C}_{2,2} & \mathbf{C}_{2,3}
\end{array}\right],
\end{equation*}
where $\mathbf{C}_{i,j}=\mathbf{A}_{i,1}\mathbf{B}_{1,j}+\mathbf{A}_{i,2}\mathbf{B}_{2,j}$.

The encoding polynomials are as follows:
\begin{align*}
p_{\mathbf{A}}(x)  & = (\mathbf{A}_{1,1}x^3  +\mathbf{A}_{1,2}) +(\mathbf{A}_{2,1}x^{3} +\mathbf{A}_{2,2})x^7 +\mathbf{R}_1 x^{13}. \\
 p_{\mathbf{B}}(x) & = \mathbf{B}_{1,1}+\mathbf{B}_{1,2}x +\mathbf{B}_{1,3}x^{2} \\
& \quad +\mathbf{B}_{2,1}x^{3}+\mathbf{B}_{2,2}x^{4}+\mathbf{B}_{2,3}x^{5}  +\mathbf{T}_1 x^{6}.
\end{align*}

Thus the product $p_{\mathbf{A}}(x)p_{\mathbf{B}}(x)$ shall rewritten as follows,
\begin{align*}
& \mathbf{A}_{1,2}(\mathbf{B}_{1,1}+\mathbf{B}_{1,2}x+\mathbf{B}_{1,3}x^2) + (\mathbf{A}_{1,1}\mathbf{B}_{1,1}+\mathbf{A}_{1,2}\mathbf{B}_{2,1})x^3 + \\& (\mathbf{A}_{1,1}\mathbf{B}_{1,2}+\mathbf{A}_{1,2}\mathbf{B}_{2,2})x^4 
+(\mathbf{A}_{1,1}\mathbf{B}_{1,3}+\mathbf{A}_{2,1}\mathbf{B}_{2,3})x^5 + \cdots  \\
 &
 +(\mathbf{A}_{2,1}\mathbf{B}_{1,1} + \mathbf{A}_{2,2}\mathbf{B}_{2,1})x^{10} + (\mathbf{A}_{2,1}\mathbf{B}_{1,2} + \mathbf{A}_{2,2}\mathbf{B}_{2,2})x^{11} \\
& + (\mathbf{A}_{2,1}\mathbf{B}_{1,3}+\mathbf{A}_{2,2}\mathbf{B}_{2,3})x^{12} + \cdots + \mathbf{R}_1 \mathbf{T}_1 x^{19}.
\end{align*}

The desired power is $3,4,5,10,11,12$. Thus interpolating on the $17$-th root of unity set will be sufficient to obtain the best recovery threshold. And note the product polynomial has a degree of $19$, thus the worst recovery threshold is $20$.

\begin{theorem} \label{reduce SEP theorem}
Denote $\hat A_1 = (K_2+1)(K_1m+X)- 1$ and $\hat A_2 = (K_1+1)(K_2m+X)- 1$. Without loss of generality, assume $\hat A_1 < \hat A_2$. When $N \geq \hat A_1$, assume the $\hat A-K_1(m-1)$-th root of unity set is a subset of the evaluation set. Then the worst recovery threshold is $\hat A_1$, and the best recovery threshold is $\hat A_1-K_1(m-1)$.
    
\end{theorem}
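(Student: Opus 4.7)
The plan is to piggyback on Theorem \ref{non DFT transpose SEP} for the worst-case bound and to deploy Lemma \ref{lemma 1} for the best-case bound. Since the encoding polynomials $p_{\mathbf{A}}(x)$ and $p_{\mathbf{B}}(x)$ used in the statement are exactly those of Theorem \ref{non DFT transpose SEP}, the $X$-security and the fact that $\hat A_1$ generic interpolation points suffice to recover $\mathbf{A}\mathbf{B}$ (so the worst recovery threshold is at most $\hat A_1$) come for free. The worst-case lower bound is immediate from the degree count: the product polynomial has degree $(K_2+1)(K_1m+X)-2 = \hat A_1 - 1$, matching $\hat A_1$.

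For the best recovery threshold, I would carefully locate the powers of $x$ in $p_{\mathbf{A}}(x) p_{\mathbf{B}}(x)$ that carry the target submatrices $\mathbf{C}_{j,k'}$. From the analysis already performed in Theorem \ref{non DFT transpose SEP}, these live in the ``first term'' of the expansion at exponents $(m-1)K_1 + (j-1) + (k'-1)(K_1m+X)$ for $1\le j\le K_1$, $1\le k'\le K_2$. The minimum such exponent is $(m-1)K_1$ and the maximum is $K_1m - 1 + (K_2-1)(K_1m+X)$. All lower-degree terms $x^0,\ldots,x^{(m-1)K_1-1}$ contain only unwanted cross products from the ``first term,'' and all higher-degree terms are either unwanted products or random noise from the second, third, and fourth terms.

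With $n = \hat A_1 - 1$ denoting the degree of the product polynomial, I would then set $s = K_1(m-1)$ in Lemma \ref{lemma 1} and interpolate on the $k$-th root of unity set for $k = n - s + 1 = \hat A_1 - K_1(m-1)$. The lemma yields all coefficients of $x^j$ for $K_1(m-1) \le j \le n - K_1(m-1)$, which covers the minimum useful exponent on the left; on the right, one checks $n - K_1(m-1) - \bigl[K_1m - 1 + (K_2-1)(K_1m+X)\bigr] = 2X + K_1 - 1 \ge 0$, so the maximum useful exponent is also captured. The side condition $s \le n/2$ of Lemma \ref{lemma 1} reduces to $2K_1(m-1) + 2 \le (K_2+1)(K_1m+X)$, which is straightforward for $K_1,K_2\ge 1$ and $X\ge 1$.

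The main subtlety, and the step I would treat most carefully, is verifying that the upper endpoint $n - K_1(m-1)$ really dominates the highest needed exponent after the asymmetric ``renormalization'' of $p_{\mathbf{A}}(x)$ (the factor $K_1$ attached to $k-1$ rather than to $j-1$). If this inequality failed, one would need a larger root-of-unity set than claimed. Once this calculation is in hand, the matching lower bound for the best recovery threshold follows by noting that any smaller $k$ would force $s > K_1(m-1)$, which would erase the needed coefficient at exponent $(m-1)K_1$, and the claim is complete.
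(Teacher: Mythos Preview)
Your proposal is correct and follows essentially the same approach as the paper: invoke Theorem \ref{non DFT transpose SEP} for the worst recovery threshold, observe that the first $K_1(m-1)$ coefficients of the product polynomial are unneeded and the top $K_1(m-1)$ coefficients are noise that does not touch any $\mathbf{C}_{j,k'}$, and then apply Lemma \ref{lemma 1} with $s=K_1(m-1)$ to obtain the best recovery threshold $\hat A_1 - K_1(m-1)$. Your write-up is in fact more explicit than the paper's, which only asserts that the top $K_1(m-1)$ powers leave the useful exponents untouched; you actually carry out the inequality check $n - K_1(m-1) - \bigl[K_1m-1+(K_2-1)(K_1m+X)\bigr]=2X+K_1-1\ge 0$ and verify the side condition $s\le n/2$ of Lemma \ref{lemma 1}, as well as sketching a matching lower bound for the best threshold that the paper omits.
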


\begin{proof}
The worst recovery threshold comes from the derivation in theorem \ref{non DFT transpose SEP}. As discussed in theorem \ref{non DFT transpose SEP}, all the sub-matrices needed to retrieve the product matrix $\mathbf{C}$ lie in the coefficients in equation (\ref{first term}), but the first $K_1(m-1)$ coefficients are not needed. As the noise term in equation (\ref{fourth term}) contains exactly the highest powers in the product polynomial. As the top-$K_1(m-1)$ powers leave the powers of $\mathbf{C}_{j, k^{\prime}}$ untouched, we can successfully apply the modulo technique to obtain the best recovery threshold. And the security follows from \cite{chang2018capacity}.    
\end{proof}

\textbf{Remark 3}: Note that from theorem \ref{reduce SEP theorem}, our construction is strictly better than SEP code in the sense that it has the same worst recovery threshold but with a better best recovery threshold.

\subsection{Apply Modulo Technique to Entangled Polynomial Code} \label{EP}

Though entangled polynomial code can be seen as taking $X=0$ in secure entangled polynomial code, directly applying the techniques in \ref{SEP} is not efficient. We shall propose a construction that can group the interpolation points to give much flexible recovery for entangled polynomial code.

\begin{theorem} \label{EP theorem}
    When $N \geq K_1K_2m+m-1$, assume the $m\lfloor \frac{N}{m} \rfloor$-th root of unity set is a subset of the evaluation set. Then the worst recovery threshold is $K_1K_2m+m-1$, the best recovery threshold is $K_1K_2m$.

\end{theorem}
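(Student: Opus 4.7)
The worst recovery threshold of $K_1K_2m+m-1$ follows immediately from the classical analysis of the entangled polynomial code, the $X=0$ specialization of the SEP encoding in Section~\ref{block partition}: one has $\deg\bigl(p_{\mathbf{A}}(x)\,p_{\mathbf{B}}(x)\bigr)=K_1K_2m+m-2$, so any $K_1K_2m+m-1$ distinct evaluations reconstruct the entire polynomial by Lagrange interpolation, and hence every coefficient, including the $\mathbf{C}_{i,j}$'s.

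For the best recovery threshold, my plan is to locate the $\mathbf{C}_{i,j}$'s in the coefficient spectrum of $p_{\mathbf{A}}(x)p_{\mathbf{B}}(x)$. A direct expansion shows that $\mathbf{C}_{i,j}=\sum_{k=1}^{m}\mathbf{A}_{i,k}\mathbf{B}_{k,j}$ is the coefficient of $x^{(m-1)+(i-1)m+(j-1)K_1m}$. As $(i,j)$ ranges over $[K_1]\times[K_2]$, these exponents trace out the arithmetic progression $\{\ell m-1:\ell=1,\ldots,K_1K_2\}$, which is precisely the periodic pattern that Lemma~\ref{lemma 2} handles with $p=m$ and $k=K_1K_2$ (note $(k+1)p-2=K_1K_2m+m-2$ matches the product polynomial's degree exactly). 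Equivalently, all these exponents lie in the middle window $[m-1,\,K_1K_2m-1]$ of the coefficient spectrum, so Lemma~\ref{lemma 1} with $n=K_1K_2m+m-2$ and $s=m-1$ applies as well, using modulus $k=K_1K_2m$.

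I would then invoke Lemma~\ref{lemma 2} (equivalently Lemma~\ref{lemma 1}) on the $K_1K_2m$-th root of unity set to extract every $\mathbf{C}_{i,j}$ from only $K_1K_2m$ evaluations: the coefficients outside the middle window get folded by the modular reduction $\bmod\,(x^{K_1K_2m}-1)$ without disturbing the $\mathbf{C}_{i,j}$-bearing ones. The hypothesis that the $m\lfloor N/m\rfloor$-th root of unity set lies in the evaluation set supplies these interpolation nodes, since $N\geq K_1K_2m+m-1$ forces $m\lfloor N/m\rfloor\geq K_1K_2m$; in the extremal case $N=K_1K_2m+m-1$ one has $m\lfloor N/m\rfloor=K_1K_2m$ exactly, so the $K_1K_2m$-th roots of unity are already all present, and the corresponding $K_1K_2m$ workers suffice.

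The main obstacle I anticipate is the subgroup bookkeeping for larger $N$: identifying $K_1K_2m$ workers whose evaluation points form a bona fide $K_1K_2m$-th root of unity subset of the prescribed $m\lfloor N/m\rfloor$-th root of unity set, which amounts to a divisibility check of $K_1K_2$ against $\lfloor N/m\rfloor$. The inner argument, that modular reduction by $x^{K_1K_2m}-1$ leaves every $\mathbf{C}_{i,j}$-coefficient untouched, is essentially automatic once the exponent placement in paragraph two is pinned down.
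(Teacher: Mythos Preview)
Your worst-threshold argument and your identification of the exponent pattern $\{\ell m-1:\ell=1,\ldots,K_1K_2\}$ are correct and match the paper. The gap is in your best-threshold argument for general $N$: the ``divisibility check'' you flag is not mere bookkeeping but a genuine obstruction. The $K_1K_2m$-th roots of unity form a subset of the $m\lfloor N/m\rfloor$-th roots of unity if and only if $K_1K_2\mid\lfloor N/m\rfloor$, which the hypotheses do not guarantee (e.g.\ $K_1K_2=4$, $m=2$, $N=13$ gives $\lfloor N/m\rfloor=6$, and the $8$-th roots of unity are not contained in the $12$-th roots). So the specific $K_1K_2m$-subset you propose may simply not exist among the workers, and neither Lemma~\ref{lemma 1} with modulus $K_1K_2m$ nor the ``direct implication'' clause of Lemma~\ref{lemma 2} applies.

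The paper's fix is to use the \emph{first} assertion of Lemma~\ref{lemma 2} rather than its direct implication. Partition the $m\lfloor N/m\rfloor$-th roots of unity into $\lfloor N/m\rfloor$ groups of size $m$, each group being a coset $\{x:x^m=\gamma_i\}$ for distinct $\gamma_i$'s. By Lemma~\ref{lemma 2} with $p=m$, the $m$ evaluations in group $i$ recover $B(\gamma_i)$, where $B$ is the degree-$(K_1K_2-1)$ polynomial whose coefficients are precisely the $\mathbf{C}_{i,j}$'s (the paper observes that this $B$ is exactly the Polynomial-code product polynomial). Since $\deg B=K_1K_2-1$, \emph{any} $K_1K_2$ of the $\lfloor N/m\rfloor\geq K_1K_2$ groups---hence any such collection of $K_1K_2m$ workers---suffice to interpolate $B$, with no divisibility condition on $\lfloor N/m\rfloor$. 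Your Lemma~\ref{lemma 1} route is recovered as the special case in which the chosen $\gamma_i$'s happen to be the $K_1K_2$-th roots of unity.
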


\begin{proof}
As the encoding functions are the same as the entangled polynomial code, thus having a worst recovery threshold of $K_1K_2m+m-1$. 
 
We shall next present the product polynomial $p_{\mathbf{A}}(x)p_{\mathbf{B}}(x)$ of entangled polynomial code.
 
\begin{align}
  &p_{\mathbf{A}}(x)p_{\mathbf{B}}(x) \nonumber  \\
  &=\hspace{-1mm} \sum_{j=1}^{K_1}\hspace{-0.7mm} \sum_{k=1}^{m} \hspace{-0.7mm}\sum_{j'=1}^{m} \hspace{-0.7mm} \sum_{k'=1}^{K_2} \hspace{-0.7mm}\mathbf{A}_{j, k}\mathbf{B}_{j', k'} x^{(k\hspace{-0.6mm}-\hspace{-0.6mm}1)+(j\hspace{-0.6mm}-\hspace{-0.6mm}1)m+(m\hspace{-0.6mm}-\hspace{-0.6mm}j')+(k'\hspace{-0.6mm}-\hspace{-0.6mm}1)K_1m}.  
\end{align}

As $\mathbf{C}_{i, j} = \sum_{k=1}^{m} \mathbf{A}_{i, k} \mathbf{B}_{k, j}$, it is the coefficient of the $m-1+(i-1)m+(j-1)K_1m$-th degree term. The coefficients of $x^{j m-1}$ ($j \in\{1, \cdots, K_1K_2\}$) are what we need. We can group evaluation points if their $m$-th power is equal, then there will be $\lfloor \frac{N}{m} \rfloor$ groups. If we take $A(x)$ in lemma \ref{lemma 2} as the product matrix polynomial $p_{\mathbf{A}}(x)p_{\mathbf{B}}(x)$ and evaluate it on each group, it is interesting to find that $B(x)$ is exactly the product matrix polynomial of Polynomial code-like partitioning of the matrices. By the recovery threshold of Polynomial code, once the computation results of any $K_1K_2$ groups are returned to the master, the master can also decode the matrix product. Thus giving the best recovery threshold of $K_1K_2m$.
\end{proof}

\textbf{Example 3}: ($K_1=K_2=m=2, N=10$)

\begin{equation*}
 \mathbf{A}=\left[\begin{array}{cc}
\mathbf{A}_{1,1} &  \mathbf{A}_{1, 2} \\
\mathbf{A}_{2,1} & \mathbf{A}_{2, 2}
\end{array}\right], 
\end{equation*}
\begin{equation*}
 \mathbf{B}=\left[\begin{array}{cc}
\mathbf{B}_{1,1} &  \mathbf{B}_{1, 2} \\
\mathbf{B}_{2,1} & \mathbf{B}_{2, 2}
\end{array}\right]
\end{equation*}

It is clear that 
\begin{equation*}
  \mathbf{A}\mathbf{B} =   \left[\begin{array}{cc}
\mathbf{A}_{1,1}\mathbf{B}_{1,1}+\mathbf{A}_{1,2}\mathbf{B}_{2,1} &  \mathbf{A}_{1,2}\mathbf{B}_{2,2}+\mathbf{A}_{1,1}\mathbf{B}_{1,2} \\
\mathbf{A}_{2,1}\mathbf{B}_{1,1}+\mathbf{A}_{2,2}\mathbf{B}_{2,1} & \mathbf{A}_{2,1}\mathbf{B}_{1,2}+\mathbf{A}_{2,2}\mathbf{B}_{2,2}
\end{array}\right].
\end{equation*}

Denote $\zeta$ as the $10$-th primitive root of unity. Then $\zeta^2$ is the $5$-th primitive root of unity and $\zeta^5=-1$. Let $x_i=\zeta^i$.

Then encode matrices are 
$$
\begin{aligned}
\tilde{\mathbf{A}}_i & =\mathbf{A}_{1,1}+\mathbf{A}_{1,2}x_{i}+\mathbf{A}_{2,1}x^{2}_{i}+\mathbf{A}_{2,2}x^{3}_{i}, \\
\tilde{\mathbf{B}}_i & =\mathbf{B}_{1,1}x_i+\mathbf{B}_{2,1}+\mathbf{B}_{1,2}x^{5}_{i}+\mathbf{B}_{2,2}x^{4}_{i}.
\end{aligned}
$$

Thus the product $\tilde{\mathbf{A}}_i\tilde{\mathbf{B}}_i$ shall rewritten as follows,

\begin{align*}
& \mathbf{A}_{1,1}\mathbf{B}_{2,1} +(\mathbf{A}_{1,2}\mathbf{B}_{1,1}+\mathbf{A}_{2,1}\mathbf{B}_{2,1})x^2_i +\mathbf{A}_{2,2}\mathbf{B}_{1,2}x^8_i \\& +(\mathbf{A}_{2,2}\mathbf{B}_{1,1}+\mathbf{A}_{1,1}\mathbf{B}_{2,2})x^4_i
+(\mathbf{A}_{2,1}\mathbf{B}_{2,2}+\mathbf{A}_{1,2}\mathbf{B}_{1,2})x^6_i   \\
 &+ (\mathbf{A}_{1,1}\mathbf{B}_{1,1}+\mathbf{A}_{1,2}\mathbf{B}_{2,1})x_i
 +(\mathbf{A}_{2,1}\mathbf{B}_{1,1}+\mathbf{A}_{2,2}\mathbf{B}_{2,1})x^3_i \\
&+(\mathbf{A}_{1,2}\mathbf{B}_{2,2}+\mathbf{A}_{1,1}\mathbf{B}_{1,2})x^5_i
 +(\mathbf{A}_{2,1}\mathbf{B}_{1,2}+\mathbf{A}_{2,2}\mathbf{B}_{2,2})x^7_i.
\end{align*}

As the product polynomial is of degree $8$, any $9$ workers' computational responses can recover the product matrix $\mathbf{A}\mathbf{B}$.

It is clear that for any $1 \leq i \leq 5$,
\begin{align*}
     &\frac{\tilde{\mathbf{A}}_i\tilde{\mathbf{B}}_i-\tilde{\mathbf{A}}_{5+i}\tilde{\mathbf{B}}_{5+i}}{\zeta^i-\zeta^{5+i}} \\ 
      =  
 &(\mathbf{A}_{1,1}\mathbf{B}_{1,1}+\mathbf{A}_{1,2}\mathbf{B}_{2,1})\frac{\zeta^i-\zeta^{5+i}}{\zeta^i-\zeta^{5+i}}\\ &+(\mathbf{A}_{2,1}\mathbf{B}_{1,1}+\mathbf{A}_{2,2}\mathbf{B}_{2,1})\frac{(\zeta^2)^i(\zeta^i-\zeta^{5+i})}{\zeta^i-\zeta^{5+i}} \\
&+(\mathbf{A}_{1,2}\mathbf{B}_{2,2}+\mathbf{A}_{1,1}\mathbf{B}_{1,2})\frac{((\zeta^2)^i)^2(\zeta^i-\zeta^{5+i})}{\zeta^i-\zeta^{5+i}}  \\
& +(\mathbf{A}_{2,1}\mathbf{B}_{1,2}+\mathbf{A}_{2,2}\mathbf{B}_{2,2})\frac{((\zeta^2)^i)^3(\zeta^i-\zeta^{5+i})}{\zeta^i-\zeta^{5+i}}. 
\end{align*}

If we define a new polynomial $h(x)$ as \begin{align*}
    &(\mathbf{A}_{1,1}\mathbf{B}_{1,1}+\mathbf{A}_{1,2}\mathbf{B}_{2,1}) +(\mathbf{A}_{2,1}\mathbf{B}_{1,1}+\mathbf{A}_{2,2}\mathbf{B}_{2,1})x + \\
& (\mathbf{A}_{1,2}\mathbf{B}_{2,2}+\mathbf{A}_{1,1}\mathbf{B}_{1,2})x^2  +(\mathbf{A}_{2,1}\mathbf{B}_{1,2}+\mathbf{A}_{2,2}\mathbf{B}_{2,2})x^3. 
\end{align*}

Then $\frac{\tilde{\mathbf{A}}_i\tilde{\mathbf{B}}_i-\tilde{\mathbf{A}}_{5+i}\tilde{\mathbf{B}}_{5+i}}{\zeta^i-\zeta^{4+i}} = h((\zeta^2)^i)$.

As $h(x)$ is of degree $3$, any $4$ evaluation results are sufficient to decode. Thus we can get the coefficient easily, which is exactly the matrix product $\mathbf{A}\mathbf{B}$.

\textbf{Remark 4}: The techniques in \ref{SEP} and \ref{EP} can also be applied to the $n$ matrix code proposed in \cite{dutta2019optimal} to reduce its recovery threshold.

\subsection{A New Distributed Matrix Multiplication Scheme Based on Locally Repairable Codes}

Assume matrix $\mathbf{A}$ is split vertically into $m$ equal column-blocks and $\mathbf{B}$ is split horizontally into $m$ equal row blocks as follows:
$$
\mathbf{A}=\left[\begin{array}{llll}
\mathbf{A}_{1} & \mathbf{A}_{2} & \ldots & \mathbf{A}_{m}
\end{array}\right], \quad \mathbf{B}=\left[\begin{array}{c}
\mathbf{B}_{1} \\
\mathbf{B}_{2} \\
\vdots \\
\mathbf{B}_{m}
\end{array}\right]
.$$

Recently, \cite{SDMMDFT} proposes a discrete fourier transform based code. Their encoding functions are as follows:
\begin{equation*}
 p_{\mathbf{A}}(x)= \sum^{m-1}_{i=0}\mathbf{A}_{i+1}x^i, 
p_{\mathbf{B}}(x)= \sum^{m-1}_{i=0}\mathbf{B}_{i+1}x^{-i}.
\end{equation*}

Suppose the $m$-th primitive root of unity is $\zeta$. There will be a total of $m$ workers, each worker $i$ will be sent the evaluation results of $p_{\mathbf{A}}$ and $p_{\mathbf{B}}$ at $\zeta^i$. Note $p_{\mathbf{A}}(x)p_{\mathbf{B}}(x)=\mathbf{A}\mathbf{B}+ (\text{non constant terms})$. Using the property of root of unity, once the master receives all the computation results, it shall use $\frac{\sum^{m}_{i=1}p_{\mathbf{A}}(\zeta^i)p_{\mathbf{B}}(\zeta^i)}{m}$ to retrieve the matrix product. It achieves a recovery threshold of $m$, while MatDot code has a recovery threshold of $2m-1$. But discrete fourier transform based code cannot mitigate stragglers. Recall that locality describes each codeword symbol may be reconstructed by some other symbols \cite{prakash2012optimal}, Thus we would expect to construct code that exploits locality to resist stragglers while still retaining a good recovery threshold.

We shall introduce the construction of our scheme.

Suppose $r$ can take any value between $1$ and $2m-1$. Let $\mathcal{A}=\left\{\alpha_{1}, \cdots \alpha_{N}\right\}$ be a set of $N$ distinct numbers and let $\mathcal{A}_{1}, \cdots, \mathcal{A}_{\frac{N}{r+\delta-1}}$ be subsets of $\mathcal{A}$ with size $(r+\delta-1)$ which form a partition of $\mathcal{A}$. For simplicity, we assume that $r+\delta-1$ divides both $N$ and $m$. Let $g(x)$ be a polynomial of degree $r+\delta-1$ which is constant on each subset $\mathcal{A}_{i}$. Inspired by \cite{jeong2018locally}, encoding polynomials of the matrices $\mathbf{A}$ and $\mathbf{B}$ are as follows:

\begin{align}
p_{\mathbf{A}}(x) &=\left(\mathbf{A}_{1}+\cdots+\mathbf{A}_{\frac{r+1}{2}} x^{\frac{r-1}{2}}\right) \nonumber \\
&+\left(\mathbf{A}_{\frac{r+1}{2}+1}+\cdots+\mathbf{A}_{(r+1)} x^{\frac{r-1}{2}}\right) g(x)+\cdots \nonumber \\
&+\left(\mathbf{A}_{m-\frac{r+1}{2}+1}+\cdots+\mathbf{A}_{m} x^{\frac{r-1}{2}}\right) g(x)^{\frac{2 m}{r+1}-1}, \\
p_{\mathbf{B}}(x) &=\left(\mathbf{B}_{m}+\cdots+\mathbf{B}_{m-\frac{r+1}{2}+1} x^{\frac{r-1}{2}}\right) \nonumber \\
&+\left(\mathbf{B}_{m-\frac{r+1}{2}}+\cdots+\mathbf{B}_{m-r} x^{\frac{r-1}{2}}\right) g(x)+\cdots \nonumber \\
&+\left(\mathbf{B}_{\frac{r+1}{2}}+\cdots+\mathbf{B}_{1} x^{\frac{r-1}{2}}\right) g(x)^{\frac{2 m}{r+1}-1} .
\end{align}

\begin{theorem} \label{LRC recover thre}
The construction given above
achieves a recovery threshold $K = 4m-r-2+(\frac{4m}{r+1}-2)(\delta-2)$. 
\end{theorem}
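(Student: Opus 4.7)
The plan is to establish $K = \deg\bigl(p_{\mathbf{A}}(x)p_{\mathbf{B}}(x)\bigr) + 1$, the natural worst-case threshold for recovering a univariate polynomial from distinct evaluations. Once the master has assembled $K$ responses at distinct points, it interpolates the product polynomial and then extracts the coefficient that encodes $\mathbf{A}\mathbf{B}$. The only non-routine parts are (i) computing this degree in closed form, and (ii) verifying that $\mathbf{A}\mathbf{B}$ is indeed recoverable from the interpolated product.

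For (i), I would rewrite the construction compactly. Setting $L = \frac{2m}{r+1}$, one has $p_{\mathbf{A}}(x) = \sum_{k=0}^{L-1} f_k(x)\, g(x)^k$ and $p_{\mathbf{B}}(x) = \sum_{k=0}^{L-1} h_k(x)\, g(x)^k$, where $f_k$ and $h_k$ are polynomials of degree $\frac{r-1}{2}$ in $x$ carrying the appropriate blocks of $\mathbf{A}$ and $\mathbf{B}$. Since $\deg g = r+\delta-1$, this gives $\deg p_{\mathbf{A}} = \frac{r-1}{2} + (L-1)(r+\delta-1)$ and hence $\deg(p_{\mathbf{A}}p_{\mathbf{B}}) + 1 = r + (2L-2)(r+\delta-1)$. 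Splitting $r+\delta-1 = (r+1)+(\delta-2)$ and using $(2L-2)(r+1) = 4m - 2r - 2$ collapses this to $4m - r - 2 + \bigl(\frac{4m}{r+1} - 2\bigr)(\delta-2)$, matching the stated formula. This part is pure algebra.

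For (ii), I would write $P(x) := p_{\mathbf{A}}(x)p_{\mathbf{B}}(x) = \sum_{c=0}^{2L-2} Q_c(x)\, g(x)^c$ with $Q_c(x) = \sum_{k+k'=c} f_k(x) h_{k'}(x)$ of degree at most $r-1$. Because $\deg Q_c < \deg g$, this is the unique base-$g(x)$ expansion of $P(x)$, so once $P$ is reconstructed by interpolation the master recovers every $Q_c$ by iterated polynomial division against $g$. I would then claim that $\mathbf{A}\mathbf{B}$ is exactly the coefficient of $x^{(r-1)/2}$ inside $Q_{L-1}(x)$, from which the matrix product can be read off directly.

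The verification of this claim is the main obstacle and is an explicit index chase. Using the telescoping identity $m - (L-1-k)\frac{r+1}{2} = (k+1)\frac{r+1}{2}$, the $\mathbf{A}$-indices in $f_k$ and the $\mathbf{B}$-indices in $h_{L-1-k}$ align so that every product $\mathbf{A}_i \mathbf{B}_j$ appearing in the $x^{(r-1)/2}$ coefficient of $f_k(x) h_{L-1-k}(x)$ must satisfy $i = j$, and the diagonal terms with $i \in \{k(r+1)/2 + 1, \ldots, (k+1)(r+1)/2\}$ are each hit exactly once; off-diagonal products $\mathbf{A}_i\mathbf{B}_j$ with $i \neq j$ land only in $Q_c$ for $c \neq L-1$ and do not contaminate the target coefficient. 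Summing $k = 0, \ldots, L-1$ then telescopes to $\sum_{i=1}^m \mathbf{A}_i \mathbf{B}_i = \mathbf{A}\mathbf{B}$, completing the argument. Security need not be discussed here since this scheme has $X = 0$.
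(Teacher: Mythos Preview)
Your proposal is correct and follows the same two-step structure as the paper: compute $\deg(p_{\mathbf{A}}p_{\mathbf{B}})$ and verify that the target coefficient equals $\mathbf{A}\mathbf{B}$. Where you differ is in rigor: the paper simply asserts that ``it is easy to find that the coefficient of $x^{\frac{r-1}{2}} g(x)^{\frac{2m}{r+1}-1}$ is exactly the matrix product $\mathbf{A}\mathbf{B}$,'' whereas you justify extraction via the unique base-$g(x)$ expansion (using $\deg Q_c < \deg g$) and carry out the index check explicitly. One small slip in your phrasing: off-diagonal products $\mathbf{A}_i\mathbf{B}_j$ with $i\neq j$ \emph{do} land in $Q_{L-1}$, just not in its $x^{(r-1)/2}$ coefficient; your actual computation already shows this, so only the sentence needs adjusting. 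Finally, the paper's proof appends a locality analysis---on each group $\mathcal{A}_i$ the polynomial $g$ is constant, so the local product has degree $r-1$ and any $r$ evaluations on $\mathcal{A}_i$ recover all of them---which is not strictly needed for the recovery-threshold claim itself but is used in the subsequent theorem and corollary.
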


\begin{proof}
First, note that we have the product polynomial written as follows. 
$$
\begin{aligned}
p_{\mathbf{C}}(x) &=p_{\mathbf{A}}(x) p_{\mathbf{B}}(x) \\
&=\sum_{j=1}^{\frac{r+1}{2}} \mathbf{A}_{j} x^{j-1} \sum_{j=1}^{\frac{r+1}{2}} \mathbf{B}_{m-j+1} x^{j-1}+\cdots \\
&+\left(\mathbf{A}_{1} \mathbf{B}_{1}+\cdots+\mathbf{A}_{m} \mathbf{B}_{m}\right) x^{\frac{r-1}{2}} g(x)^{\frac{2 m}{r+1}-1}+\cdots \\
&+\sum_{j=1}^{\frac{r+1}{2}} \mathbf{A}_{m-j+1} x^{j-1} \sum_{j=1}^{\frac{r+1}{2}} \mathbf{B}_{j} x^{j-1} g(x)^{\frac{4 m}{r+1}-2}.
\end{aligned}
$$
It is easy to find that the coefficient of $x^{\frac{r-1}{2}} g(x)^{\frac{2 m}{r+1}-1}$ is exactly the matrix product $\mathbf{A}\mathbf{B}$.
The degree of polynomial has the relationship, $deg( p_{\mathbf{C}}(x)) =deg( p_{\mathbf{A}}(x)) + deg( p_{\mathbf{B}}(x))=4m-r-3+(\frac{4m}{r+1}-2)(\delta-2)$. Thus getting a recovery threshold $K=deg(p_{\mathbf{C}}(x))+1=4m-r-2+(\frac{4m}{r+1}-2)(\delta-2)$. 

We shall then analysis the locality of the code. When evaluated on each local group $\mathcal{A}_i$, $g(x)$ remains a constant $\gamma_i$. Thus from the expression of $p_{\mathbf{A}}(x)$ and $p_{\mathbf{B}}(x)$, they both have degree $\frac{r-1}{2}$ and the degree of product polynomial $p_{\mathbf{C}}(x)$ will be $r-1$. Therefore, any $r$ computation results on $\mathcal{A}_i$ will be sufficient to decode the product polynomial, then getting all the evaluation results on $\mathcal{A}_i$. 
\end{proof}

\begin{theorem}
\label{LRC}
Assume that $\frac{r+1}{2} | m$. Then when interpolating at the $\frac{m}{\frac{r+1}{2}}(r+\delta-1)$-th root of unity set, we can benefit from the local group repair structure and the modulo trick.
\end{theorem}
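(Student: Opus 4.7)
My plan is to show two things about the choice of evaluation set: (i) the $N$-th roots of unity, with $N=\frac{2m(r+\delta-1)}{r+1}$, split naturally into the local groups $\mathcal{A}_i$ on which $g(x)$ is constant, so local repair from any $r$ responses works; and (ii) the product polynomial $p_{\mathbf{C}}(x)=p_{\mathbf{A}}(x)p_{\mathbf{B}}(x)$ is "balanced" around the coefficient that encodes $\mathbf{A}\mathbf{B}$, so Lemma \ref{lemma 1} applied modulo $x^N-1$ extracts that coefficient.

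The first step is to make the group structure transparent by taking $g(x)=x^{r+\delta-1}$ (any polynomial constant on cosets of the $(r+\delta-1)$-th roots of unity would do). Since $(r+\delta-1)\mid N$, the $N$-th roots of unity partition into $\frac{N}{r+\delta-1}=\frac{2m}{r+1}$ cosets of the group of $(r+\delta-1)$-th roots of unity, and on each such coset $\mathcal{A}_i$ the value $g(x)=\gamma_i$ is constant. Then $p_{\mathbf{A}}(x)|_{\mathcal{A}_i}$ and $p_{\mathbf{B}}(x)|_{\mathcal{A}_i}$ are polynomials of degree $\frac{r-1}{2}$, so $p_{\mathbf{C}}(x)|_{\mathcal{A}_i}$ has degree $r-1$. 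This is exactly the situation of Theorem \ref{LRC recover thre}: any $r$ responses within a group determine the remaining $\delta-1$ evaluations, giving the claimed $(r,\delta)$-locality.

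The second step is to verify that, once all $N$ evaluations are in hand, Lemma \ref{lemma 1} recovers the target coefficient. Note that $\mathbf{A}\mathbf{B}$ is the coefficient of $x^{D_{\mathrm{mid}}}$ with $D_{\mathrm{mid}}=\tfrac{r-1}{2}+(\tfrac{2m}{r+1}-1)(r+\delta-1)$ in $p_{\mathbf{C}}(x)$, and a direct computation shows that $p_{\mathbf{C}}$ has degree $n=2D_{\mathrm{mid}}$, so $D_{\mathrm{mid}}$ sits exactly in the middle. For Lemma \ref{lemma 1} to apply with $k=N$, I need $s:=n-k+1\leq n/2$, i.e.\ $N\geq D_{\mathrm{mid}}+1$; substituting the values simplifies to $r+\delta\geq\tfrac{r+5}{2}$, which holds for all admissible $r\geq 1,\delta\geq 2$. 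Hence the $D_{\mathrm{mid}}$-th coefficient lies in the safe range $[s,n-s]$ returned by the modulo decoder.

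The expected main obstacle is keeping the indexing clean: the exponents mix $g(x)^\ell$ (giving steps of size $r+\delta-1$) with inner powers of $x$ of size up to $r-1$, and one has to argue that applying the modulo $x^N-1$ touches only the top and bottom cross-terms in $p_{\mathbf{A}}p_{\mathbf{B}}$ and leaves the coefficient carrying $\mathbf{A}\mathbf{B}$ intact. Once the two inequalities above are verified, combining the local repair (each group tolerates $\delta-1$ stragglers) with the modulo decoding (which lowers the number of distinct evaluations needed from $\deg(p_{\mathbf{C}})+1$ to $N$) yields a scheme that simultaneously resists stragglers and matches the reduced recovery threshold promised by interpolating on roots of unity.
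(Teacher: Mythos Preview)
Your proposal is correct and follows essentially the same route as the paper: verify that $(r+\delta-1)\mid N$ so the $N$-th roots of unity carry the local-group structure, and then check the two inequalities of Lemma~\ref{lemma 1} to confirm that the middle coefficient $D_{\mathrm{mid}}$ survives reduction modulo $x^{N}-1$. The only slip is arithmetic: the inequality $N\ge D_{\mathrm{mid}}+1$ simplifies to $r+\delta-1\ge\tfrac{r+1}{2}$, i.e.\ $r+\delta\ge\tfrac{r+3}{2}$ rather than $\tfrac{r+5}{2}$; your conclusion is unaffected, and the paper in addition observes that this $N$ is the \emph{smallest} multiple of $r+\delta-1$ meeting the bound.
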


\begin{proof}

We only need the coefficient of $x^{\frac{r-1}{2}}g(x)^{\frac{2m}{r+1}-1}$, whose degree is half of that of product polynomial. Thus it is suitable to apply lemma \ref{lemma 1}. We want to choose a suitable degree $k$ that when interpolating on the $k$-th root of unity set ensures a lower recovery threshold and locally repairable structure at the same time. In order to maintain the requirements of lemma \ref{lemma 1}, we need $$deg(x^{\frac{r-1}{2}}g(x)^{\frac{2m}{r+1}-1})>deg(p_{\mathbf{C}}(x)) - k$$ and $$k-1 \geq deg(x^{\frac{r-1}{2}}g(x)^{\frac{2m}{r+1}-1}).$$ Thus we need $k \geq deg(x^{\frac{r-1}{2}}g(x)^{\frac{2m}{r+1}-1})+1$. From the division requirement of local group, we shall need $r + \delta -1 | k$. Note that $deg(x^{\frac{r-1}{2}}g(x)^{\frac{2m}{r+1}-1})+1 = 2m-\frac{r+1}{2}+(\frac{2m}{r+1}-1)(r+\delta-1)-(r+1)(\frac{2m}{r+1}-1)=\frac{r+1}{2} + (\frac{2m}{r+1}-1)(r+\delta-1)$. We also know $\frac{r+1}{2} \leq r + \delta -1$, the smallest $k$ satisfies these is $(r+\delta-1)\frac{2m}{r+1}$.
\end{proof}

\begin{corollary} \label{th1 corr}
When applying the above scheme, we shall have the best recovery threshold $2m-\frac{m}{\frac{r+1}{2}}$ and worst recovery threshold $2m-1+(\frac{m}{\frac{r+1}{2}}-1)(\delta-2)$.
\end{corollary}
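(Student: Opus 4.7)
The plan is to combine the modulo-based evaluation recipe of Theorem \ref{LRC} with the local repair property of the encoding polynomials. Theorem \ref{LRC} says that evaluating $p_{\mathbf{C}}(x)=p_{\mathbf{A}}(x)p_{\mathbf{B}}(x)$ on the $(r+\delta-1)\frac{2m}{r+1}$-th roots of unity suffices to isolate, via Lemma \ref{lemma 1}, the coefficient of $x^{\frac{r-1}{2}}g(x)^{\frac{2m}{r+1}-1}$, which equals $\mathbf{A}\mathbf{B}$. This root of unity set splits into $\frac{2m}{r+1}=\frac{m}{(r+1)/2}$ disjoint local groups $\mathcal{A}_1,\ldots,\mathcal{A}_{2m/(r+1)}$ of cardinality $r+\delta-1$, on each of which $g$ is constant and hence $p_{\mathbf{C}}(x)|_{\mathcal{A}_i}$ has degree $r-1$; consequently any $r$ responses inside $\mathcal{A}_i$ determine all $r+\delta-1$ evaluations of $p_{\mathbf{C}}$ on $\mathcal{A}_i$.

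For the best recovery threshold, I would exhibit the optimistic scenario in which the master receives exactly $r$ responses from every local group. Local decoding then completes the missing $\delta-1$ evaluations inside each $\mathcal{A}_i$; the full list of $\frac{2m(r+\delta-1)}{r+1}$ root of unity values becomes available, and Theorem \ref{LRC} returns $\mathbf{A}\mathbf{B}$. This optimal schedule uses $\frac{2m}{r+1}\cdot r=2m-\frac{m}{(r+1)/2}$ responses, matching the stated best threshold; any smaller number would force some group to contain at most $r-1$ points, blocking the local repair of that group and hence the use of Lemma \ref{lemma 1}.

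For the worst recovery threshold, I would construct the adversarial distribution of responses that maximises the number of received messages while still preventing reconstruction: fill $\frac{2m}{r+1}-1$ local groups completely (contributing $r+\delta-1$ responses each) and leave the last group with only $r-1$ responses, giving $(r+\delta-1)\bigl(\frac{2m}{r+1}-1\bigr)+(r-1)$ received but undecodable responses. Hence one additional response is both necessary and sufficient to push every group past the $r$-threshold, and a short algebraic rearrangement
\begin{equation*}
(r+\delta-1)\left(\tfrac{2m}{r+1}-1\right)+r \;=\; 2m-1+\left(\tfrac{m}{(r+1)/2}-1\right)(\delta-2)
\end{equation*}
recovers the claimed worst threshold. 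The main obstacle is not conceptual but rather checking that Lemma \ref{lemma 1} applies uniformly across local groups and that the adversarial distribution is feasible, which follows from the divisibility assumption $\frac{r+1}{2}\mid m$ already imposed in Theorem \ref{LRC}; no new coding-theoretic ingredient beyond the interaction between the modulo trick and the local decodability of each $\mathcal{A}_i$ is required.
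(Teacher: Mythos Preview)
Your proposal is correct and follows essentially the same approach as the paper: the best threshold comes from receiving exactly $r$ responses per local group, and the worst threshold comes from the adversarial configuration in which the missing responses concentrate in a single group (the paper phrases this as ``the slowest workers are all from the same local group, which we can only resist $\delta-1$ of them''). Your version is simply more explicit, spelling out the undecodable configuration $(r+\delta-1)\bigl(\tfrac{2m}{r+1}-1\bigr)+(r-1)$ and verifying the algebraic identity, whereas the paper's proof is a two-sentence sketch.
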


\begin{proof}
The best recovery threshold is achieved whenever we receive any $r$ responses from all the local groups $\mathbf{A}_i$. The worst recovery threshold is achieved whenever the slowest workers are all from the same local group, which we can only resist $\delta-1$ of them.  
\end{proof}

\textbf{Remark 5}: Note the LRC MatDot code in \cite{jeong2018locally} is a $(r, \delta = 2)$ case of our scheme and MatDot Code is a $(r=1, \delta=1)$ case of our scheme.

\textbf{Example 4}: ($m=6,  r=3, \delta=3, N=15$ )
% First, we split the matrices $\mathbf{A}$ and $\mathbf{B}$ as follows:
% $$
% \mathbf{A}=\left[\begin{array}{lll}
% \mathbf{A}_{1} &  \ldots & \mathbf{A}_{6}
% \end{array}\right], \quad \mathbf{B}=\left[\begin{array}{c}
% \mathbf{B}_{1} \\
% \vdots \\
% \mathbf{B}_{6}
% \end{array}\right]
% $$

Let $g(x)=x^5$, a polynomial of degree $5$.

The encoding polynomials are as follows:
\begin{align*} 
& p_{\mathbf{A}}(x) \hspace{-1mm} = \hspace{-1mm}\left(\mathbf{A}_1+\mathbf{A}_2 x\right)\hspace{-1mm}  +\hspace{-1mm}\left(\mathbf{A}_3+\mathbf{A}_4 x\right) g(x)\hspace{-1mm}+\hspace{-1mm}\left(\mathbf{A}_5+\mathbf{A}_6 x\right) g(x)^2. \\
& p_{\mathbf{B}}(x) \hspace{-1mm} = \hspace{-1mm} \left(\mathbf{B}_6+\mathbf{B}_5 x\right)\hspace{-1mm}+\hspace{-1mm}\left(\mathbf{B}_4+\mathbf{B}_3 x\right) g(x)\hspace{-1mm}+\hspace{-1mm}\left(\mathbf{B}_2+\mathbf{B}_1 x\right) g(x)^2.
\end{align*}
As the degree of product polynomial $p_{\mathbf{C}}(x)$ is $22$ and $\mathbf{A}\mathbf{B}$ is the coefficient of $x^{11}$ in the product polynomial. We shall choose any degree $k \geq 12$ to satisfy the requirement of lemma \ref{lemma 1}. Note each local group is of size $5$, thus we want $5|k$. The minimum choice of $k$ is $15$. 

Denote $\zeta$ be the $15$-th primitive root, then 
\begin{equation*}
\mathcal{A}=\left\{\zeta^{1}, \cdots, \zeta^{15}\right\}    
\end{equation*}
will be the set of $15$-th root of unity. Then letting
\begin{equation*}
\mathcal{A}_{1}=\left\{\zeta^{1}, \zeta^{4},\cdots, \zeta^{13}\right\}, \cdots, \mathcal{A}_{3}=\left\{\zeta^{3}, \zeta^{6},\cdots, \zeta^{15}\right\} 
\end{equation*}
be subsets of $\mathcal{A}$ of size $5$ that form a partition of $\mathcal{A}$. Thus $g(x)$ is constant on each subset $\mathcal{A}_{i}$.

The $i$-th worker node receives the encoded matrices, $p_{\mathrm{A}}\left(\zeta^i\right)$ and $p_{\mathbf{B}}\left(\zeta^i\right)$, and then computes the following product:
$$
\begin{aligned}
p_{\mathbf{C}}(x)= & \left(\mathbf{A}_1+\mathbf{A}_2 x\right)\left(\mathbf{B}_6+\mathbf{B}_5 x\right)+\cdots \\
& +\left(\mathbf{A}_1 \mathbf{B}_1+\cdots \mathbf{A}_6 \mathbf{B}_6\right) x g(x)^2+\cdots \\
& +\left(\mathbf{A}_5+\mathbf{A}_6 x\right)\left(\mathbf{B}_2+\mathbf{B}_1 x\right) g(x)^4 .
\end{aligned}
$$
at $x=\zeta^i$.

Let us denote $g\left(\mathcal{A}_1\right)=\gamma$. Then $p_{\mathbf{C}}(x)$ at $\zeta^1, \zeta^4, \cdots, \zeta^{13}$ can be rewritten as:
$$
\begin{aligned}
p_{\mathbf{C}}(x)= & \left(\mathbf{A}_1+\mathbf{A}_2 x\right)\left(\mathbf{B}_6+\mathbf{B}_5 x\right)+\cdots \\
& +\left(\mathbf{A}_1 \mathbf{B}_1+\cdots \mathbf{A}_6 \mathbf{B}_6\right) x \gamma^2+\cdots \\
& +\left(\mathbf{A}_5+\mathbf{A}_6 x\right)\left(\mathbf{B}_2+\mathbf{B}_1 x\right) \gamma^4.
\end{aligned}
$$

Now, notice that this is a polynomial of degree $2$, which can be recovered from evaluation at any three points. The locality may be used to give a flexible recovery threshold. The best case is that we can get any $3$ evaluation results in each $\mathcal{A}_{i}$, then using locality we can recover all the interpolation results and thus recover the matrix product, giving a recovery threshold of $9$. The worst case is that we get $13$ computation results and the $2$ unfinished results are from some same $\mathcal{A}_{i}$.

We shall give a construction of LRC MatDot Code that can also maintain security when $X \leq \frac{r+1}{2}$. 

The partition of matrices and the configuration of polynomial $g(x)$ with local groups are similar as previously discussed. Matrices $\mathbf{R}_{1}, \ldots, \mathbf{R}_{X} $ and $\mathbf{T}_{1}, \ldots, \mathbf{T}_{X} $ are drawn at random. The encoding polynomials of the matrices $\mathbf{A}$ and $\mathbf{B}$ are given as follows:

\begin{align}
p_{\mathbf{A}}(x) &=\left(\mathbf{A}_{1}+\cdots+\mathbf{A}_{\frac{r+1}{2}} x^{\frac{r-1}{2}}\right) \nonumber \\
&+\left(\mathbf{A}_{\frac{r+1}{2}+1}+\cdots+\mathbf{A}_{(r+1)} x^{\frac{r-1}{2}}\right) g(x)+\cdots \nonumber \\
&+\left(\mathbf{A}_{m-\frac{r+1}{2}+1}+\cdots+\mathbf{A}_{m} x^{\frac{r-1}{2}}\right) g(x)^{\frac{2 m}{r+1}-1} \nonumber \\
&+(\mathbf{R}_{1}+\cdots+\mathbf{R}_{X} x^{X-1}) g(x)^{\frac{2 m}{r+1}} , \\
p_{\mathbf{B}}(x) &=\left(\mathbf{B}_{m}+\cdots+\mathbf{B}_{m-\frac{r+1}{2}+1} x^{\frac{r-1}{2}}\right) \nonumber \\
&+\left(\mathbf{B}_{m-\frac{r+1}{2}}+\cdots+\mathbf{B}_{m-r} x^{\frac{r-1}{2}}\right) g(x)+\cdots \nonumber \\
&+\left(\mathbf{B}_{\frac{r+1}{2}}+\cdots+\mathbf{B}_{1} x^{\frac{r-1}{2}}\right) g(x)^{\frac{2 m}{r+1}-1} \nonumber \\
&+(\mathbf{T}_{1}+\cdots+\mathbf{T}_{X} x^{X-1}) g(x)^{\frac{2 m}{r+1}}.
\end{align}

\begin{theorem} \label{th2 secure}
The construction given above is $X$-secure. It has recovery threshold $K = 4m+2\delta+2X-5+(\frac{4m}{r+1}-2)(\delta-2)$. 
\end{theorem}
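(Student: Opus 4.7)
The statement has two pieces — a recovery threshold $K$ and $X$-security — which I plan to handle separately via a degree count of $p_\mathbf{A}p_\mathbf{B}$ combined with a $g(x)$-adic analysis, and a Vandermonde masking argument. For the recovery threshold, I would first observe that in each encoding polynomial the noise block $(\mathbf{R}_1+\cdots+\mathbf{R}_X x^{X-1})g(x)^{\frac{2m}{r+1}}$ dominates in degree: its top power is $(X-1)+(r+\delta-1)\frac{2m}{r+1}$, exceeding the largest data-term power $\frac{r-1}{2}+(r+\delta-1)(\frac{2m}{r+1}-1)$ by $X-1+(r+\delta-1)-\frac{r-1}{2}\geq 0$. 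Hence $\deg p_\mathbf{A}=\deg p_\mathbf{B}=(X-1)+(r+\delta-1)\frac{2m}{r+1}$ and $\deg p_\mathbf{C}=2(X-1)+\frac{4m(r+\delta-1)}{r+1}$. A short algebraic identity, splitting $\frac{4m(r+\delta-1)}{r+1}=4m+\frac{4m(\delta-2)}{r+1}$ and regrouping, converts $\deg p_\mathbf{C}+1$ into the stated $K$.

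Next I would show that $\mathbf{A}\mathbf{B}$ is actually extractable from $p_\mathbf{C}$. By Theorem~\ref{LRC recover thre}, the pure data-data product places $\mathbf{A}\mathbf{B}$ at the coefficient of $x^{\frac{r-1}{2}}g(x)^{\frac{2m}{r+1}-1}$ in the $g(x)$-adic expansion. Every cross term involving at least one $\mathbf{R}_i$ or $\mathbf{T}_i$ carries a factor $g(x)^{\frac{2m}{r+1}}$, and noise-noise terms carry $g(x)^{\frac{4m}{r+1}}$, so those contributions sit in strictly higher $g$-adic slots and cannot perturb the $\mathbf{A}\mathbf{B}$ coefficient. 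Standard polynomial interpolation at any $K$ evaluation points then recovers $p_\mathbf{C}$, and reading off the distinguished coefficient yields $\mathbf{A}\mathbf{B}$.

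For security, I would fix a colluding set $\mathcal{X}\subset[N]$ of size $X$ and write $p_\mathbf{A}(\alpha_j)=D_j(\mathbf{A})+g(\alpha_j)^{\frac{2m}{r+1}}\sum_{i=1}^{X}\mathbf{R}_i\alpha_j^{i-1}$ for $j\in\mathcal{X}$, where $D_j(\mathbf{A})$ is the deterministic data part. With $g(\alpha_j)\neq 0$ on the evaluation set (ensured by the natural local-group choice of $g$), the map $(\mathbf{R}_1,\ldots,\mathbf{R}_X)\mapsto (p_\mathbf{A}(\alpha_j))_{j\in\mathcal{X}}$ is affine with a Vandermonde linear part, hence invertible on distinct $\alpha_j$; uniform $\mathbf{R}_i$ then make the collusion view uniform and independent of $\mathbf{A}$. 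The analogous argument with $\mathbf{T}_i$ yields independence from $\mathbf{B}$, matching the template of \cite{chang2018capacity} used in Theorem~\ref{SEP theorem}.

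The main obstacle I anticipate is the bookkeeping needed to certify that every noise cross-product is absorbed into $g$-adic slots of index $\geq \frac{2m}{r+1}$, leaving the $\mathbf{A}\mathbf{B}$ slot untouched, and that the implicit constraint $X\leq \frac{r+1}{2}$ under which the scheme is posed prevents the noise polynomial $\mathbf{R}_1+\cdots+\mathbf{R}_X x^{X-1}$ from spilling into the neighbouring $g$-block (which would also break the local-group decodability inherited from Theorem~\ref{LRC recover thre}). Once those two checks are in place, both claims follow from the degree count and the masking argument with only routine algebra.
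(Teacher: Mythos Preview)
Your proposal is correct and follows essentially the same approach as the paper: the paper's proof simply cites \cite{chang2018capacity} for security, asserts that $\mathbf{A}\mathbf{B}$ is the coefficient of $x^{\frac{r-1}{2}}g(x)^{\frac{2m}{r+1}-1}$ in $p_\mathbf{C}$, and states that the degree of $p_\mathbf{C}$ gives $K$. Your degree count, $g$-adic slot analysis (noise terms carrying $g(x)^{\frac{2m}{r+1}}$ or higher, with the constraint $X\le\frac{r+1}{2}$ preventing carry-over between $g$-blocks), and Vandermonde masking argument are exactly the details the paper leaves implicit, so you are filling in the same skeleton rather than taking a different route.
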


\begin{proof}
The validity of security comes from \cite{chang2018capacity}. It is clear that coefficient of $x^{\frac{r-1}{2}} g(x)^{\frac{2 m}{r+1}-1}$ of product polynomial is exactly the matrix product $\mathbf{A}\mathbf{B}$. Calculating the degree of product polynomial will imply the recovery threshold. 
\end{proof}

\section{Performance Comparisons} \label{sec:compar}

The state-of-the-art methods are reflected in secure entangled polynomial code (SEP) \cite{zhu2021improved}, polynomial sharing (PS) code \cite{nodehi2018entangled}, discrete fourier transform (DFT) based code \cite{SDMMDFT}, entangled polynomial code (EP) \cite{yu2020straggler}. We name the proposed methods as SEP-DFT, EP-DFT and LRC-DFT.

Table \ref{comparison1} shows the results about block partitioned schemes for distributed matrix matrix multiplication problems. We can see that our method can consistently improve the recovery threshold with a number of $m-1$.

Table \ref{comparison2} shows the results about inner product partitioned schemes for distributed matrix matrix multiplication problems. As DFT code has small recovery threshold, it cannot resist straggler. MatDot code on the other hand can resist straggler, but with a much bigger recovery threshold. LRC-DFT code can offer a flexible trade-off between the recovery threshold and the number of allowed stragglers and subsumes DFT code and LRC MatDot code \cite{jeong2018locally} as special cases.

\begin{table}[h]
     \caption{Comparison of different distributed matrix multiplication schemes based on block partitioning, where the job is to compute the product of the matrices $\mathbf{A}$ and $\mathbf{B}$, $\mathbf{A}$ is divided into $K_1$ block rows and $m$ block columns, $\mathbf{B}$ is divided into $m$ block rows and $K_2$ block columns, $\hat A_1 = (K_2+1)(K_1m+X)- 1$, $\hat A_2 = (K_1+1)(K_2m+X)- 1$, $\hat A_3 = 2K_1K_2m+2X-1$ and $\hat A = min \{ \hat A_1, \hat A_2, \hat A_3 \}$.}
    \centering
    \begin{tabular}{|l|l|}
    \hline
        Method & Recovery Threshold  \\ \hline
        SEP and PS \cite{zhu2021improved,nodehi2018entangled} & $\hat A$  \\ \hline
        EP \cite{yu2020straggler }& $K_1K_2m+m-1$ \\ \hline
        SEP-DFT (from Theorem \ref{SEP theorem}) &  $\hat A-m+1 \sim \hat A$ \\ \hline
        The proposed code in Theorem \ref{reduce SEP theorem} &  $\hat A_1 - K_1(m-1) \sim \hat A_1$ \text{or}  \\ 
        & $\hat A_2-K_2(m-1) \sim \hat A_2$ \\ \hline
        EP-DFT (from Theorem \ref{EP theorem}) & $K_1K_2m \sim K_1K_2m + m - 1$  \\ \hline
    \end{tabular}

    \label{comparison1}
\end{table}

\begin{table}[h]
    \caption{Comparison of different distributed matrix multiplication schemes based on inner product partitioning, where the job is to compute the product of the matrices $\mathbf{A}$ and $\mathbf{B}$, $\mathbf{A}$ is divided into $m$ block columns, $\mathbf{B}$ is divided into $m$ block rows, $A_1=2m-\frac{m}{\frac{r+1}{2}}$, $A_2=2m-1+(\frac{m}{\frac{r+1}{2}}-1)(\delta-2)$.}
    \centering
    \begin{tabular}{|l|l|}
    \hline
        Method & Recovery Threshold  \\ \hline
        DFT \cite{SDMMDFT} & $m$   \\ \hline
        MatDot \cite{dutta2019optimal} & $2m-1$  \\ \hline
        LRC-DFT (from Theorem \ref{LRC}) & $A_1 \sim A_2$ \\ \hline
    \end{tabular}

    \label{comparison2}

\end{table}

\section{Applying to matrix computing with special structure} \label{structure mat mult}

We find many problems like generalized linear models (GLM) \cite{data2020data}, singular value decomposition (SVD) \cite{bentbib2015block}, matrix factorization via alternating least squares \cite{wang2021coded}, and optimal transport (OT) \cite{cuturi2013sinkhorn} share a similar structures. Specifically, there will be a fixed matrix $\mathbf{A}$ and two matrices $\mathbf{U}^{(\ell)}$ and $\mathbf{V}^{(\ell+1)}$ updating during the iterative steps. The main bottleneck will be two products $\mathbf{A}\mathbf{U}^{(\ell)}$ and $\mathbf{A}^T \mathbf{V}^{(\ell+1)}$. Assume that $\mathbf{A}$ is a $a$ by $b$ matrix, $\mathbf{U}^{(\ell)}$ and $\mathbf{V}^{(\ell+1)}$ both have $c$ columns.

To alleviate the expensive bottleneck, we will apply the coding theoretic techniques outlined in the preceding sections. Our suggested method is designed to be memory-efficient and capable of ensuring double-sided security while detecting byzantine workers. Under this approach, we will assume that $\mathbf{A}$, $\mathbf{U}^{(\ell)}$, and $\mathbf{V}^{(\ell+1)}$ will be partitioned into blocks of size $K_1 \times m$, $m \times s$, and $K_1 \times s$, respectively. 

Without loss of generality, assume $(K_2+1)(K_1m+X) < (K_1+1)(K_2m+X)$.

The proposed method works as follows:

\textbf{Preliminary Step}
Choose $N$ distinct numbers $x_i$ ($i = 1,2, \cdots, N$), where $x_i$s contain the $(K_2+1)(K_1m+X)-K_2(m-1)$-th root of unity set. Denote  
\begin{equation} \label{A encode}
\bar{\mathbf{A}}_i  =\sum_{j=1}^{K_1} \sum_{k=1}^{m} \mathbf{A}_{j, k} x^{(k-1)+(j-1)m}_i + \sum^{X}_{i=1} \mathbf{R}_i x^{K_1m+i-1}_i ,  
\end{equation}
then the master distributes the encoded matrix $\bar{\mathbf{A}}_i$ to worker $i$ ($i \in [N-1]$). 

\textbf{Checking Key Generation} The master will generate two verification row vector keys $\mathbf{r}^{1}_i \in F^{1 \times \frac{a}{K_1}}_q, \mathbf{r}^{2}_i \in F^{1 \times \frac{b}{m}}_q$ for each worker $i$, where each coordinate of $\mathbf{r}^{1}_i, \mathbf{r}^{2}_i$ is from the uniform distribution on $F_q$. The master then creates two vectors $\mathbf{s}^{1}_i \in F^{1 \times \frac{b}{m}}_q, \mathbf{s}^{2}_i \in F^{1 \times \frac{a}{K_1}}_q$ for each worker $i$, where 
\begin{equation} \label{check 1}
\mathbf{s}^{1}_i= \mathbf{r}^{1}_i \bar{\mathbf{A}}_i    
\end{equation}
and 
\begin{equation} \label{check 2}
\mathbf{s}^{2}_i= \mathbf{r}^{2}_i  \bar{\mathbf{A}}^T_i. 
\end{equation}
The master will maintain these vectors $\mathbf{r}^{1}_i, \mathbf{r}^{2}_i, \mathbf{s}^{1}_i, \mathbf{s}^{2}_i$ during the whole process.

\textbf{Then we shall get into the details of the iterative process, we shall use $\ell$ to denote the $\ell$-th iteration.}

Suppose 
\begin{equation}
 \mathbf{U}^{(\ell)}=\left[\begin{array}{ccc}
\mathbf{U}^{(\ell)}_{1,1} & \cdots & \mathbf{U}^{(\ell)}_{1, s} \\
\vdots & \ddots & \vdots \\
\mathbf{U}^{(\ell)}_{m,1} & \cdots & \mathbf{U}^{(\ell)}_{m, s}
\end{array}\right],
\end{equation} and 

\begin{equation}
 \mathbf{V}^{(\ell+1)}=\left[\begin{array}{ccc}
\mathbf{V}^{(\ell+1)}_{1,1} & \cdots & \mathbf{V}^{(\ell+1)}_{1, s} \\
\vdots & \ddots & \vdots \\
\mathbf{V}^{(\ell+1)}_{K_1,1} & \cdots & \mathbf{V}^{(\ell+1)}_{K_1, s}
\end{array}\right].
\end{equation}

\textbf{The Computation Phase I}
The master encodes $\mathbf{U}^{(\ell)}$ into 
\begin{align}\label{U encode}
\bar{\mathbf{U}}^{(\ell)}_i &= \sum_{j=1}^{m} \sum_{k=1}^{s} \mathbf{U}^{(\ell)}_{j, k} x^{(m-j)+(k-1)(K_1m+X)}_i \nonumber\\ &+ \sum^{X}_{i=1} \mathbf{T}_i x^{(s-1)(K_1m+X)+K_1m+i-1}_i,
\end{align}
and distributes the encoded matrix $\bar{\mathbf{U}}^{(\ell)}_i$ to worker $i$. Each worker $i$ calculates $\bar{\mathbf{A}}_i \bar{\mathbf{U}}^{(\ell)}_i$ and returns it to the master.

\textbf{Checking and Decoding Phase I}

Note that what each worker does is just matrix multiplication, thus it suits Freivalds’ algorithm well \cite{freivalds1979fast}. Specifically, whenever the master receives the computation result $\mathbf{\lambda}^{(\ell)}_i$ from the worker $i$, it will do the following checking 
\begin{equation} \label{verify 1}
\mathbf{r}^{1}_i \mathbf{\lambda}^{(\ell)}_i = \mathbf{s}^{1}_i \bar{\mathbf{U}}^{(\ell)}_i.  
\end{equation}
If the equation holds, this worker passes the byzantine check. When the master receives enough results from non-byzantine workers, it will decode $\mathbf{A} \mathbf{U}^{(\ell)}$. Then the master will then transform the matrix multiplication result $\mathbf{A} \mathbf{U}^{(\ell)}$ into $\mathbf{V}^{(\ell +1)}$.

\textbf{The Computation Phase II}
The master encodes $\mathbf{V}^{(\ell +1)}$ using the construction in section \ref{transpose sep code}, i.e. 
\begin{align}
\bar{\mathbf{V}}^{(\ell + 1)}_i &= \sum_{j=1}^{K_1} \sum_{k=1}^{s} {\mathbf{V}}^{(\ell + 1)}_{j, k} x^{(K_1-j)m+(K_1m+X)(k-1)}_i \nonumber\\
 &+ \sum^{X}_{i=1} \mathbf{H}_i x^{(s-1)(K_1m+X)+K_1m+i-1}_i,  
\end{align}
and distributes the encoded matrix $\bar{\mathbf{V}}^{(\ell + 1)}_i$ to worker $i$ ($i \in [N-1]$). Each worker $i$ calculates $\bar{\mathbf{A}}^T_i \bar{\mathbf{V}}^{(\ell + 1)}_i$ and returns it to the master.

\textbf{Checking and Decoding Phase II}
Whenever the master receives the computation result $\mathbf{\mu}^{(\ell +1)}_i$ from the worker $i$, it will do the following checking 
\begin{equation}
\mathbf{r}^{2}_i \mathbf{\mu}^{(\ell +1)}_i = \mathbf{s}^{2}_i \bar{\mathbf{V}}^{(\ell +1)}_i.   
\end{equation}
If the equation holds, this worker passes the byzantine check. When the master receives enough results from non-byzantine workers, it will decode $\mathbf{A}^T \mathbf{V}^{(\ell +1)}$. Then the master will then transform the term $\mathbf{A}^T \mathbf{V}^{(\ell +1)}$ into $\mathbf{U}^{(\ell +1)}$. Then $\ell = \ell +1$ and we shall return to The Computation Phase I.

\begin{theorem}
The proposed scheme is memory efficient and straggler resilient. It can also maintain security and detect byzantine workers. 
\end{theorem}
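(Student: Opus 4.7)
The plan is to verify the four claims (memory efficiency, straggler resilience, information-theoretic security, and byzantine detection) one at a time, reusing the structural results established earlier in Section \ref{transpose sep code}.

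First, I would argue memory efficiency. Observe that the encoded matrix $\bar{\mathbf{A}}_i$ in equation (\ref{A encode}) depends only on $\mathbf{A}$ and the worker-specific evaluation point $x_i$. Since $\mathbf{A}$ is fixed across iterations, $\bar{\mathbf{A}}_i$ is distributed once in the preliminary step and kept at worker $i$ permanently, so the per-worker storage for the $\mathbf{A}$-side is $\tfrac{ab}{K_1 m}$ rather than $ab$. During each iteration the worker only receives the freshly encoded $\bar{\mathbf{U}}^{(\ell)}_i$ or $\bar{\mathbf{V}}^{(\ell+1)}_i$ and can discard it after returning its product. The verification vectors $\mathbf{r}^{1}_i,\mathbf{r}^{2}_i,\mathbf{s}^{1}_i,\mathbf{s}^{2}_i$ kept at the master have length $O\bigl(\tfrac{a}{K_1}+\tfrac{b}{m}\bigr)$ each, which is negligible compared to the matrices themselves.

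Next, I would address straggler resilience and correctness of decoding. Both Computation Phase I and Computation Phase II are exactly instances of the code constructed in Section \ref{transpose sep code} (with the roles of $\mathbf{A}$ and $\mathbf{B}$ interchanged in Phase II). Hence Theorem \ref{non DFT transpose SEP} applies directly: once any $(K_2{+}1)(K_1m{+}X)-1$ honest workers respond, the master can interpolate the product polynomial and extract the desired $\mathbf{A}\mathbf{U}^{(\ell)}$ or $\mathbf{A}^{T}\mathbf{V}^{(\ell+1)}$. Because the preliminary evaluation set contains the appropriate root of unity set, Theorem \ref{reduce SEP theorem} further gives the reduced best recovery threshold via the modulo trick. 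The $N$ workers therefore tolerate up to $N-\hat A_1$ stragglers in both phases.

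For $X$-security, I would invoke the argument already used in the proofs of Theorems \ref{non DFT transpose SEP} and \ref{reduce SEP theorem}: in each of $\bar{\mathbf{A}}_i$, $\bar{\mathbf{U}}^{(\ell)}_i$, $\bar{\mathbf{V}}^{(\ell+1)}_i$ the random pads $\mathbf{R}_i,\mathbf{T}_i,\mathbf{H}_i$ occupy $X$ consecutive degrees and are sampled i.i.d.\ uniformly from $F_q$. By the result of \cite{chang2018capacity}, any collusion of $X$ workers learns nothing about $\mathbf{A}$, $\mathbf{U}^{(\ell)}$, or $\mathbf{V}^{(\ell+1)}$. I would also note that the verification keys $\mathbf{r}^{1}_i,\mathbf{r}^{2}_i$ never leave the master, and $\mathbf{s}^{1}_i,\mathbf{s}^{2}_i$ are used only locally, so they cannot degrade security.

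The main obstacle is the byzantine-detection claim; it is where I expect to spend the most care. The key identity is that for an honest worker returning $\boldsymbol{\lambda}^{(\ell)}_i=\bar{\mathbf{A}}_i\bar{\mathbf{U}}^{(\ell)}_i$, one has
\begin{equation*}
\mathbf{r}^{1}_i\boldsymbol{\lambda}^{(\ell)}_i=\mathbf{r}^{1}_i\bar{\mathbf{A}}_i\bar{\mathbf{U}}^{(\ell)}_i=\mathbf{s}^{1}_i\bar{\mathbf{U}}^{(\ell)}_i
\end{equation*}
by the definition (\ref{check 1}); the analogous identity for Phase II follows from (\ref{check 2}). Conversely, if a byzantine worker submits $\boldsymbol{\lambda}^{(\ell)}_i=\bar{\mathbf{A}}_i\bar{\mathbf{U}}^{(\ell)}_i+\mathbf{E}$ with $\mathbf{E}\neq 0$, passing the check requires $\mathbf{r}^{1}_i\mathbf{E}=0$. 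Since $\mathbf{r}^{1}_i$ is drawn uniformly from $F_q^{1\times a/K_1}$ and, crucially, is unknown to the worker (it is stored only at the master), the probability that $\mathbf{r}^{1}_i$ lies in the left kernel of the nonzero matrix $\mathbf{E}$ is at most $1/q$, which is Freivalds' bound \cite{freivalds1979fast}. The subtle point I would examine is whether reusing the same key $\mathbf{r}^{1}_i$ across iterations leaks information to a byzantine worker: because the worker only observes $\bar{\mathbf{U}}^{(\ell)}_i$ and the acceptance bit of each round, but never $\mathbf{r}^{1}_i$ or $\mathbf{s}^{1}_i$, the conditional distribution of $\mathbf{r}^{1}_i$ given its entire view remains sufficiently spread so that a single-round error bound $1/q$ still applies. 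With this verification in place the proof concludes.
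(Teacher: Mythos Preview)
Your argument has a genuine gap in the memory-efficiency and Phase~II correctness parts, and it is precisely the point the paper's proof is built around. You observe that $\bar{\mathbf{A}}_i$ is computed once and stored, but that alone does not explain why the \emph{same} stored matrix can serve Phase~II, where the worker must compute an encoding of $\mathbf{A}^{T}\mathbf{V}^{(\ell+1)}$. The scheme has worker $i$ multiply $\bar{\mathbf{A}}^{T}_i\bar{\mathbf{V}}^{(\ell+1)}_i$; for this to be decodable by the code of Section~\ref{transpose sep code} you must show that $\bar{\mathbf{A}}^{T}_i$ is exactly the evaluation at $x_i$ of the Section~\ref{transpose sep code} encoding polynomial applied to $\mathbf{A}^{T}$. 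That is an algebraic identity that needs to be checked: transposing the SEP encoding $\sum_{j,k}\mathbf{A}_{j,k}x_i^{(k-1)+(j-1)m}+\sum_i\mathbf{R}_ix_i^{K_1m+i-1}$ and reindexing via $(\mathbf{A}^{T})_{j,k}=\mathbf{A}_{k,j}$ yields $\sum_{j,k}(\mathbf{A}^{T})_{j,k}x_i^{(k-1)m+(j-1)}+\sum_i\mathbf{R}^{T}_ix_i^{K_1m+i-1}$, which matches the Section~\ref{transpose sep code} encoding of $\mathbf{A}^{T}$ (with the block-row and block-column counts swapped). This identity is the whole reason the new code was introduced; without it you would need a second stored encoding of $\mathbf{A}^{T}$, and the memory-efficiency claim would collapse.

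A related inaccuracy: you assert that both phases are instances of the Section~\ref{transpose sep code} code. Phase~I is not; $\bar{\mathbf{A}}_i$ and $\bar{\mathbf{U}}^{(\ell)}_i$ are the ordinary SEP encodings, so the Phase~I recovery threshold comes from the SEP code, while Phase~II uses the transpose identity above together with Theorem~\ref{non DFT transpose SEP}. Your treatments of security and byzantine detection are fine (and in fact more detailed than the paper, which simply cites \cite{chang2018capacity} and Freivalds' algorithm \cite{freivalds1979fast}), but the missing transpose verification is the essential step.
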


\begin{proof}
We shall first explain why this is memory efficient by showing that we can reuse the encoded matrix $\bar{\mathbf{A}}_i$ in the iterative process. 
The basic point here is noticing the relationship between the encoded matrix of SEP Code and the proposed Code in section \ref{transpose sep code}. Specifically, when using SEP code to distribute the computation of $\mathbf{A} \mathbf{U}^{(\ell)}$. Recall the encoding polynomials for $\mathbf{A}$ and $\mathbf{U}^{(\ell)}$ are 
\begin{equation}
p_{\mathbf{A}}(x)=\sum_{j=1}^{K_1} \sum_{k=1}^{m} \mathbf{A}_{j, k} x^{(k-1)+(j-1)m} + \sum^{X}_{i=1} \mathbf{R}_i x^{K_1m+i-1},   
\end{equation}
and 
\begin{align}
p_{\mathbf{U}^{(\ell)}}(x) &= \sum_{j=1}^{m} \sum_{k=1}^{s} \mathbf{U}^{(\ell)}_{j, k} x^{(m-j)+(k-1)(K_1m+X)} \nonumber \\ &+ \sum^{X}_{i=1} \mathbf{T}_i x^{(s-1)(K_1m+X)+K_1m+i-1}.    
\end{align}
Thus $\bar{\mathbf{A}}_i = p_{\mathbf{A}}(x_i)$ and $\bar{\mathbf{U}}^{(\ell)}_i = p_{\mathbf{U}^{(\ell)}}(x_i)$, then from the recovery threshold of SEP code, any $(K_2+1)(K_1m+X)-1$ non-byzantine nodes' computational results are sufficient to recover the matrix multiplication $\mathbf{A}\mathbf{U}^{(\ell)}$. For the purpose of calculating the matrix product $\mathbf{A}^T\mathbf{V}^{(\ell +1)}$, we consider applying the construction in section \ref{transpose sep code} to matrices $\mathbf{A}^T$ and $\bar{\mathbf{V}}^{(\ell+1)}$. Note $\mathbf{R}_i$ has each of its components sampled from i.i.d. uniform distribution on $F_q$, thus $\mathbf{R}^T$ also has each of the components sampled from i.i.d. uniform distribution on $F_q$. Denote 
\begin{equation}
\bar{p}_{\mathbf{A}^T}(x)=\sum_{j=1}^{m} \sum_{k=1}^{K_1} (\mathbf{A}^T)_{j, k} x^{(k-1)m+j-1} + \sum^{X}_{i=1} \mathbf{R}^T_i x^{K_1m+i-1},    
\end{equation}
and 
\begin{align}
\bar{p}_{\mathbf{V}^{(\ell+1)}} &=\sum_{j=1}^{K_1} \sum_{k=1}^{s} {\mathbf{V}}^{(\ell + 1)}_{j, k} x^{(K_1-j)m +(K_1m+X)(k-1)} \nonumber \\ & + \sum^{X}_{i=1} \mathbf{H}_i x^{(s-1)(K_1m+X)+K_1m+i-1},  
\end{align}
where $\mathbf{H}_i$ is sampled from uniform distribution on $F_q$. And it is clear that $\bar{\mathbf{V}}^{(\ell+1)}_i = \bar{p}_{\mathbf{V}^{(\ell+1)}}(x_i)$. Most importantly, 
\begin{align*}
\bar{\mathbf{A}}^T_i &= (p_{\mathbf{A}}(x_i))^T \\ &= (\sum_{j=1}^{m} \sum_{k=1}^{K_1} \mathbf{A}_{k, j} x^{(k-1)m+j-1}_i + \sum^{X}_{i=1} \mathbf{R}_i x^{K_1m+i-1}_i)^T \\&=\sum_{j=1}^{m} \sum_{k=1}^{K_1} (\mathbf{A}^T)_{j, k} x^{(k-1)m+j-1}_i + \sum^{X}_{i=1} \mathbf{R}^T_i x^{K_1m+i-1}_i \\&=\bar{p}_{\mathbf{A}^T}(x_i).  \end{align*}
 This makes it exactly the form of the proposed code in section \ref{transpose sep code}, which means that we can reuse the encoded matrix $\bar{\mathbf{A}}_i$ to save the storage.

For straggler resilient and security, it follows directly from the argument of SEP and the proposed code in section \ref{transpose sep code}.

The ability to detect byzantine workers follows Freivalds’ algorithm. Note we sacrifice one worker per byzantine worker instead of two stated by the Singleton bound \cite{yu2019lagrange}.
\end{proof}

\begin{theorem} \label{complexity}
(Analysis of the Computation Complexity)
\begin{enumerate}
    \item Encoding of $\mathbf{A}$ has complexity $O(\frac{K_1m+X}{K_1m}Nab)$, this may seen as pre-processing and only needs to be done once.
    \item Generating the checking vectors $\mathbf{s}^1_i$ and $\mathbf{s}^2_i$ ($i \in [N-1]$) need a total of complexity $O(\frac{abN}{K_1m})$, this may also be seen as pre-processing and only need to be done once.
    \item The computation process of the two phases at each node will take a complexity of $O(\frac{abc}{K_1ms})$.
    \item The byzantine check process of the two phases at each node will take a complexity of $O(\frac{c}{s}(\frac{a}{K_1}+\frac{b}{m}))$. 
    \item The encoding of $\mathbf{U}^{(\ell)}$ ($\mathbf{V}^{(\ell)}$) at each node $i$ will take a complexity of $O(\frac{ms+X}{ms}Nbc)$ ($O(\frac{K_1s+X}{K_1s}Nac)$).
    \item The (worst) decoding process at each node of phase I takes a complexity of $O(\frac{ac}{K_1 s}((s+1)(K_1m+X)- 1)\log^2((s+1)(K_1m+X)- 1)\log\log((s+1)(K_1m+X)- 1)$. The (worst) decoding process at each node of phase II takes a complexity of $O(\frac{bc}{ms}((s+1)(K_1m+X)- 1)\log^2((s+1)(K_1m+X)- 1)\log\log((s+1)(K_1m+X)- 1)$.
    \item The storage overhead at each worker of storing the encoded matrix is $\frac{ab}{K_1m}$.
\end{enumerate}
\end{theorem}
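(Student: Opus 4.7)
The plan is to treat each of the seven items as a dimension-counting exercise, using the block sizes $\frac{a}{K_1}\times\frac{b}{m}$ for each $\bar{\mathbf{A}}_i$, $\frac{b}{m}\times\frac{c}{s}$ for each $\bar{\mathbf{U}}^{(\ell)}_i$, and $\frac{a}{K_1}\times\frac{c}{s}$ for each $\bar{\mathbf{V}}^{(\ell+1)}_i$, together with the number of terms in the respective encoding polynomials. I would proceed item by item in the order stated in the theorem.

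For items 1 and 5, I would observe that $p_{\mathbf{A}}(x)$ has exactly $K_1m+X$ coefficient matrices of size $\frac{ab}{K_1m}$, so evaluating it at one point is a linear combination costing $O\!\left((K_1m+X)\cdot\frac{ab}{K_1m}\right)$; multiplying by $N$ workers gives the claimed bound. The encoding of $\mathbf{U}^{(\ell)}$ and $\mathbf{V}^{(\ell+1)}$ is handled identically, reading off the number of polynomial terms ($ms+X$ and $K_1s+X$) and block size from equations (\ref{U encode}) and its analogue. For item 2, $\mathbf{s}^{1}_i = \mathbf{r}^{1}_i\bar{\mathbf{A}}_i$ is a single vector-matrix product of cost $O(\frac{ab}{K_1m})$, repeated $2N$ times; the same counting works for $\mathbf{s}^{2}_i$. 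Items 3 and 4 are immediate: item 3 is a single matrix-matrix multiplication of the three block dimensions, and item 4 is the Freivalds check, which requires two vector-matrix products of cost $O(\frac{ac}{K_1s})$ and $O(\frac{bc}{ms})$ respectively. Item 7 is simply the block size of $\bar{\mathbf{A}}_i$.

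The only nontrivial step, and hence the main obstacle, is item 6, the decoding complexity. Here I would argue as follows: the master recovers the matrix product $\mathbf{A}\mathbf{U}^{(\ell)}$ (respectively $\mathbf{A}^T\mathbf{V}^{(\ell+1)}$) entry-wise by polynomial interpolation, where the relevant polynomial has degree at most $(s+1)(K_1m+X)-2$ coming from the product-polynomial degree in Theorem \ref{non DFT transpose SEP} applied with $K_2$ replaced by $s$. Using fast polynomial interpolation over a finite field (e.g.\ via the subproduct-tree and fast multipoint evaluation methods cited through the FFT literature already referenced in the paper), a single entry can be recovered in $O(K\log^2 K\log\log K)$ operations, where $K=(s+1)(K_1m+X)-1$. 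Multiplying by the number of entries in the decoded submatrix, $\frac{ac}{K_1s}$ for phase I and $\frac{bc}{ms}$ for phase II, yields the stated bounds. The subtlety is that one must justify using the generic fast-interpolation cost rather than the better $O(K\log K)$ inverse-DFT cost: because the theorem states the \emph{worst} decoding cost, the set of responding non-byzantine workers need not correspond to a root-of-unity subset, so we cannot invoke the closed-form inverse DFT, and the generic fast interpolation bound is the appropriate one. Once this point is spelled out, all seven claims are assembled into a single statement of the theorem.
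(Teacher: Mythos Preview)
Your proposal is correct and follows essentially the same dimension-counting argument as the paper: for each item you identify the block sizes, count the number of polynomial terms, and multiply out, exactly as the paper does (citing \cite{kedlaya2011fast} for the $O(K\log^2 K\log\log K)$ interpolation bound in item~6). Your treatment is in fact slightly more careful than the paper's in two places: for item~4 you account for both vector--matrix products in the Freivalds identity (the paper's proof records only the left-hand side $\mathbf{r}^1_i\lambda^{(\ell)}_i$), and for item~6 you explicitly justify why the \emph{worst}-case decoding must use generic fast interpolation rather than the cheaper inverse DFT, a point the paper leaves implicit.
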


\begin{proof}

\begin{enumerate}
\item The encoded matrix for $\mathbf{A}$ is given by equation (\ref{A encode}). As each sub-matrices $\mathbf{A}_{i, j}$ and $\mathbf{R}_i$ has size $\frac{ab}{K_1m}$ and the total number of sub-matrices involved in the calculation of encoded matrix at each node is $K_1m+X$. Thus the total complexity for encoding $\mathbf{A}$ is given by $O(\frac{K_1m+X}{K_1m}Nab)$.
\item The calculations of checking vectors are given by equations (\ref{check 1}) and (\ref{check 2}). Note the matrix $\bar{\mathbf{A}}_i$ has size $\frac{ab}{K_1m}$. Thus the total encoding complexity is given by $O(\frac{Nab}{K_1m})$.
\item In computation phase I, the computation at each node involves multiplying $\bar{\mathbf{A}}_i\bar{\mathbf{U}}^{(\ell)}_i$. Note $\bar{\mathbf{A}}_i$ has size $\frac{ab}{K_1m}$ and $\bar{\mathbf{U}}^{(\ell)}_i$ has size $\frac{bc}{ms}$. Thus the matrix computation complexity will be $O(\frac{abc}{K_1ms})$. The case in computation phase II is similar.
\item In computation phase I, the verification process is given by equation (\ref{verify 1}). As the matrix $\lambda^{(\ell)}_i$ has size $\frac{ac}{K_1s}$, thus the verification process has complexity of $O(\frac{ac}{K_1s})$. Computation phase II is similar.
\item The encoded matrix for $\mathbf{U}$ is given by equation (\ref{U encode}). As each sub-matrices $\mathbf{U}_{i, j}$ and $\mathbf{T}_i$ has size $\frac{bc}{ms}$ and the total number of sub-matrices involved in the calculation of encoded matrix at each node is $ms+X$. Thus the total complexity for encoding $\mathbf{U}$ is given by $O(\frac{ms+X}{ms}Nbc)$. The case for $\mathbf{V}^{(\ell +1)}$ is similar.
\item In computation phase I, the computed matrix product has size $\frac{ac}{K_1s}$. Note interpolating a degree $r$ polynomial requires a computation overhead of $O(r \log^2 r \log \log r)$ \cite{kedlaya2011fast}. As the matrix product at node $i$ can be seen as the evaluation of a degree $(s+1)(K_1m+X)- 1)$ polynomial. The conclusion follows. The case of computation phase II is similar.

\item As the encoded matrix $\bar{\mathbf{A}}_i$ is reused during the whole process. The storage overhead will be $\frac{ab}{K_1m}$.
\end{enumerate}
    
\end{proof}

\section{Applications} \label{GLM}
In this section, we shall give a brief introduction to the backgrounds and algorithms of Generalized Linear Models (GLM). From the structure of GLM, we can apply our scheme easily.

\subsubsection{Problem Formulation}
Given a dataset consisting of $m$ labelled data points $(\mathbf{x}_i, y_i) \in \mathbf{R}^n \times \mathbf{R}, i \in [n]$, we want to learn a model/parameter vector $\mathbf{w} \in \mathbf{R}^n$, which is a minimizer of the following empirical risk minimization problem:
$$
\min _{\mathbf{w} \in \mathbf{R}^n}\left(\left(\frac{1}{m} \sum_{i=1}^m f_i(\mathbf{w})\right)+h(\mathbf{w})\right),
$$
where $f_i(\mathbf{w})=\ell\left(\left\langle\mathbf{x}_i, \mathbf{w}\right\rangle ; y_i\right)$ for some differentiable loss function $\ell$. Here, each $f_i: \mathbf{R}^n \rightarrow \mathbf{R}$ is differentiable, $h: \mathbf{R}^n \rightarrow \mathbf{R}$ is convex but not necessarily differentiable, and $\left\langle\mathbf{x}_i, \mathbf{w}\right\rangle$ is the dot product of $\mathbf{x}_i$ and $\mathbf{w}$. We also denote $f(\mathbf{w}) = \frac{1}{m} \sum^{m}_{i=1} f_{i}(\mathbf{w})$.

This model includes many important algorithms as its subclass \cite{data2020data}, including Lasso, Linear programming via interior point method and Newton conjugate gradient, Logistic regression, Linear Regression, Ridge Regression, and so on. These algorithms have found many applications in scientific computing \cite{lindsey2000applying}.

\subsubsection{Main Algorithm \cite{data2020data}}

In order to update the model using the projected gradient descent algorithm, a crucial step involves the computation of the gradient vector $\nabla f(\mathbf{w})$. To facilitate this calculation, we introduce the notation 
\begin{equation}
\ell^{\prime}\left(\left\langle\mathbf{x}_{i}, \mathbf{w}\right\rangle ; y_{i}\right):=\left.\frac{\partial}{\partial u} \ell\left(u ; y_{i}\right)\right|_{u=\left\langle\mathbf{x}_{i}, \mathbf{w}\right\rangle},    
\end{equation}where $\ell^{\prime}$ represents the derivative of the loss function $\ell$ with respect to its argument, evaluated at the inner product $\left\langle\mathbf{x}_{i}, \mathbf{w}\right\rangle$. Furthermore, we denote the gradient vector $\nabla f_{i}(\mathbf{w}) \in \mathbf{R}^{d}$ as a column vector.

Let $f^{\prime}(\mathbf{w})$ be an $n$-length column vector, where the $i$-th entry corresponds to $\ell^{\prime}\left(\left\langle\mathbf{x}_{i}, \mathbf{w}\right\rangle ; y_{i}\right)$. With the aforementioned notation, a two-round approach naturally arises for computing the gradient vector $\nabla f(\mathbf{w})$ as follows:

\begin{enumerate}
\item In the first round, we compute $f^{\prime}(\mathbf{w})$ by performing a matrix-vector multiplication of $\mathbf{X}$ with $\mathbf{w}$, where $\mathbf{X}$ represents the data matrix. Each local node then computes the corresponding entry of $f^{\prime}(\mathbf{w})$ using $\mathbf{X} \mathbf{w}$, where $\mathbf{X} \mathbf{w}$ is an $n$-dimensional vector with the $i$-th entry equal to $\left\langle\mathbf{x}_{i}, \mathbf{w}\right\rangle$. Specifically, the $i$-th entry of $f^{\prime}(\mathbf{w})$ is given by $\left(f^{\prime}(\mathbf{w})\right)_{i}=\ell^{\prime}\left(\left\langle\mathbf{x}_{i}, \mathbf{w}\right\rangle ; y_{i}\right)$.

\item In the second round, we compute $\nabla f(\mathbf{w})$ by performing a matrix-vector multiplication of $\mathbf{X}^{T}$ with $f^{\prime}(\mathbf{w})$. This step involves multiplying the transpose of the data matrix $\mathbf{X}^{T}$ with the column vector $f^{\prime}(\mathbf{w})$.
\end{enumerate}

By following this two-round approach, we can accurately compute the gradient vector $\nabla f(\mathbf{w})$ necessary for the updated model using the projected gradient descent algorithm. It is very clear that the iterative process involves two costly matrix multiplications $\mathbf{X} \mathbf{w}$ and $\mathbf{X}^{T} f^{\prime}(\mathbf{w})$. As we can quantize the real numbers into the finite fields, we can apply the scheme in section \ref{structure mat mult} easily.
 
\subsubsection{Comparison}
Note the prior work \cite{data2020data} cannot guarantee security, but our scheme can. We shall then compare the scheme in \cite{data2020data} with our scheme under the same byzantine and straggler detection ability, thus $X=0$. Also, the scheme proposed in \cite{data2020data} cannot support column partition of $\mathbf{A}$, but our scheme can. Therefore, we set the column partition parameter $K_2=1$. As the there will only involve matrix-vector multiplication, thus $c=s=1$.

Suppose there will be $k$ stragglers and $t$ byzantine workers. In order to maintain straggler resilience and byzantine robustness throughout the two computation rounds during each iteration. Thus imposing a constraint of $N-k-t \geq 2K_1-1$. Therefore we take $K_1 = \frac{N+1-(k+t)}{2}$. We can obtain the computation complexity of our scheme when applied to GLM as a special case of the theorem \ref{complexity}. Combined with the complexity analysis given by \cite{data2020data}, we can summarize the comparison result in table \ref{comparison 3}.

\begin{table}[!htbp]
\caption{Comparison of different schemes to compute GLM distributedly. Define $\theta=\frac{N+1-(t+k)}{2}$.}
    \centering
    \begin{tabular}{|l|l|l|}
    \hline
         & Method in \cite{data2020data} & Our proposed method \\ \hline
        Storage overhead & $\frac{2ab}{N-2(t+k)}$ & $\frac{2ab}{N+1-(t+k)}$ \\ \hline
        Phase I check cost& $O(\frac{2a(t+k)N}{N-2(t+k)})$ & $O(a)$ \\ \hline
        Phase II check cost &  $O(\frac{2b(t+k)N}{N-2(t+k)})$ & $O(b(N-(t+k)))$\\ \hline
        Phase I decoding cost & $O(\frac{aN^2}{N-2(t+k)})$ & $O(a \log^2 \theta \log \log \theta)$ \\\hline
        Phase II decoding cost & $O(\frac{bN^2}{N-2(t+k)})$ & $O(b\theta \log^2(\theta) \log \log (\theta))$ \\ \hline
    \end{tabular}
\label{comparison 3}
\end{table}

We shall discuss the comparison in detail and organize it as follows. 
\begin{enumerate}
    \item Prior work \cite{data2020data} needs a total of $\frac{2ab}{N-2(t+k)}$ space to store the encoded matrix. Our scheme needs a total of $\frac{2ab}{N+1-(t+k)}$ space, which is strictly smaller.
    \item In phase I, our scheme check byzantine worker one by one, but prior work must wait until all the required data. We have a total of $O(a)$ complexity to detect the adversaries, while \cite{data2020data} needs $O(\frac{a}{N-2(t+k)}2(t+k)N)$. Thus our complexity is strictly smaller. Phase II is similar, except that \cite{data2020data} needs $O(\frac{b}{N-2(t+k)}2(t+k)N)$ complexity and ours demand $O(b(N-(t+k)))$.
    \item To simplify the expression, we define $\theta=\frac{N+1-(t+k)}{2}$. In phase I, prior work \cite{data2020data} needs a complexity of $O(\frac{aN^2}{N-2(t+k)})$, while we need only $O(a \log^2 \theta \log \log \theta)$, which is strictly smaller. In phase II, prior work \cite{data2020data} needs a complexity of $O(\frac{bN^2}{N-2(t+k)})$, while we need only $O(b\theta \log^2(\theta) \log \log (\theta))$. The decoding complexity of our proposed method is asymptotic similar to that of \cite{data2020data} in phase II.
\end{enumerate}

\section{Conclusion} \label{sec:conclu}
In this paper, we discussed the problem of coded matrix multiplication problem. We proposed a modulo technique that could be applied to various coded matrix designs and reduced their recovery threshold. We also discussed how to alleviate this technique in choosing the interpolation points for coded matrix multiplication. We further apply our proposed method to tackle a matrix computing problem with a special structure, which includes GLM as a special case. It remains interesting to discover locally recoverable construction that is secure and has a better recovery threshold.

\bibliographystyle{IEEEtran}
\bibliography{IEEEabrv,Bibliography}

%%
%% If several appendices are needed, then the command
%%
% \appendices
%%
%% in combination with further \section-commands can be used.
%%%%%%

\end{document}